\documentclass[a4paper, 11pt]{article}
\usepackage[margin=.95in,dvips]{geometry}
\usepackage[bottom]{footmisc}
\usepackage[affil-it]{authblk}
\usepackage{verbatim}
\usepackage{float}
\usepackage{amsmath}
\usepackage{tikz}
\usepackage{etoolbox}

\pgfdeclarelayer{back}
\pgfdeclarelayer{front}
\pgfsetlayers{back,main,front}


\usepackage{hyperref}
\hypersetup{
  colorlinks   = true, 
  urlcolor     = blue, 
  linkcolor    = blue, 
  citecolor   = blue 
}
\usepackage{enumitem}
\setlist[enumerate]{itemsep=2.0pt plus 1.0 pt minus 0.5pt, topsep=4.0pt plus 2.0 pt minus 1.0pt}
\setlist[itemize]{itemsep=2.0pt plus 1.0 pt minus 0.5pt, topsep=4.0pt plus 2.0 pt minus 1.0pt}
\usepackage{calc}
\usepackage{graphicx}
\usepackage{subcaption}
\captionsetup[figure]{labelfont=bf}
\usepackage{color, soul}

\makeatletter
\pgfkeys{%
  /tikz/on layer/.code={
    \pgfonlayer{#1}\begingroup
    \aftergroup\endpgfonlayer
    \aftergroup\endgroup
  },
  /tikz/node on layer/.code={
    \gdef\node@@on@layer{%
      \setbox\tikz@tempbox=\hbox\bgroup\pgfonlayer{#1}\unhbox\tikz@tempbox\endpgfonlayer\egroup}
    \aftergroup\node@on@layer
  },
  /tikz/end node on layer/.code={
    \endpgfonlayer\endgroup\endgroup
  }
}
\def\node@on@layer{\aftergroup\node@@on@layer}
\def\calign@preamble{%
   &\hfil\strut@
    \setboxz@h{\@lign$\m@th\displaystyle{##}$}%
    \ifmeasuring@\savefieldlength@\fi
    \set@field
    \hfil
    \tabskip\alignsep@
}
\let\cmeasure@\measure@
\patchcmd\cmeasure@{\divide\@tempcntb\tw@}{}{}{}
\patchcmd\cmeasure@{\divide\@tempcntb\tw@}{}{}{}
\patchcmd\cmeasure@{\ifodd\maxfields@
  \global\advance\maxfields@\@ne
  \fi}{}{}{}    
\newenvironment{calign}
{%
  \let\align@preamble\calign@preamble
  \let\measure@\cmeasure@
  \align
}
{%
  \endalign
}  
\makeatother

\usepackage{marginnote}
\setlength{\marginparwidth}{2.3cm}
\setlength{\marginparsep}{0.1cm}

\usepackage{color,soulutf8}

\setlength{\marginparwidth}{2.2cm}
\setlength{\marginparsep}{0.1cm}
\newcounter{commcounter}
\setcounter{commcounter}{1}

\usepackage{amsmath}
\usepackage{amsthm, slashed}
\usepackage[capitalize]{cleveref}
\usepackage{amssymb}
\usepackage{array}
\usepackage{xfrac}
\usepackage{braket}
\usepackage{bbm}
\usepackage{bm}
\usepackage{revsymb} 
\usepackage{mathtools}
\usepackage{mathrsfs}
\usepackage{accents}

\allowdisplaybreaks

\usepackage[square, numbers, comma, sort&compress]{natbib}

\let\OLDthebibliography\thebibliography
\renewcommand\thebibliography[1]{
  \OLDthebibliography{#1}
  \setlength{\parskip}{0pt}
  \setlength{\itemsep}{0pt plus 0.5ex}
}


\DeclareMathOperator{\Tr}{Tr}


\usepackage{amsthm}

\theoremstyle{plain}
\newtheorem{theorem}{Theorem}[section]
\newtheorem{lemma}[theorem]{Lemma}
\theoremstyle{definition}
\newtheorem{defn}[theorem]{Definition}


\renewcommand\tilde[1]{\widetilde{#1}}
\tikzset{redregion/.style={fill=red, fill opacity=0.3, draw=none}}
\tikzset{blueregion/.style={fill=blue, fill opacity=0.3, draw=none}}
\tikzset{block/.style={draw, fill=blue!40, minimum width=0.55cm, minimum height=0.55cm, rounded corners=2pt, node on layer=front, thick, inner sep=-5pt}}
\tikzset{blockflip/.style={block, fill=black!40}}
\tikzset{measure/.style={block, fill=red}}


\begin{document}

\title{\bf From dual-unitary to biunitary:
\\
a 2-categorical model for exactly-solvable
\\many-body quantum dynamics}

\author[1]{Pieter W. Claeys\footnote{\href{mailto:claeys@pks.mpg.de}{claeys@pks.mpg.de}}}
\author[2]{Austen Lamacraft\footnote{\href{mailto:al200@cam.ac.uk}{al200@cam.ac.uk}}}
\author[3]{Jamie Vicary\footnote{\href{mailto:jamie.vicary@cl.cam.ac.uk}{jamie.vicary@cl.cam.ac.uk}}}

\affil[1]{\small Max Planck Institute for the Physics of Complex Systems, 01187 Dresden, Germany}
\affil[2]{\small TCM Group, Cavendish Laboratory, University of Cambridge, Cambridge CB3 0HE, UK}
\affil[3]{\small Computer Laboratory, University of Cambridge, Cambridge, CB3 0FD, UK}

\date{}

\maketitle


\begin{abstract}
Dual-unitary brickwork circuits are an exactly-solvable model for many-body chaotic quantum systems, based on 2-site gates which are unitary in both the time and space directions. Prosen has recently described an alternative  model called \textit{dual-unitary interactions round-a-face}, which we here call \textit{clockwork}, based on 2-controlled 1-site unitaries composed in a non-brickwork structure, yet with many of the same attractive global properties. We present a 2\-categorical framework that simultaneously generalizes these two existing models, and  use it to show that brickwork and clockwork circuits can interact richly, yielding new types of generalized heterogeneous circuits. We show that these interactions are governed by quantum combinatorial data, which we precisely characterize. These generalized circuits remain exactly-solvable and we show that they retain the attractive features of the original models such as single-site correlation functions vanishing everywhere except on the causal light-cone. Our framework allows us to directly extend the notion of solvable initial states to these biunitary circuits, and we show these circuits demonstrate maximal entanglement growth and exact thermalization after finitely many time steps.
\end{abstract}

\vspace{0.5cm}

\section{Introduction}
\label{sec:intro}

The dynamics of isolated many-body quantum systems remains a  complex problem, and exact solutions are both scarce and typically non-representative of the chaotic dynamics of generic quantum systems.
In recent years unitary circuit models have gained intense attention as a paradigmatic model of unitary many-body dynamics governed by local interactions \cite{nahum_quantum_2017,khemani_operator_2018,von_keyserlingk_operator_2018,nahum_operator_2018,chan_solution_2018,rakovszky_sub-ballistic_2019,friedman_spectral_2019,garratt_local_2021}. Such circuits mimic the dynamics generated by a local Hamiltonian on a one-dimensional lattice, with the `dynamics' of a unitary circuit taking place in discrete time. 

A special class of \textit{dual-unitary} circuits was recently identified, characterized by the property that the bulk dynamics remains unitary when exchanging the roles of space and time \cite{bertini_exact_2019,gopalakrishnan_unitary_2019}. This duality endows the circuits with many remarkable properties, including  analytically tractable correlation dynamics~\cite{bertini_exact_2019,piroli_exact_2020,gutkin_exact_2020,claeys_ergodic_2021,kos_correlations_2021}, out-of-time-order correlators \cite{claeys_maximum_2020,bertini_scrambling_2020}, and maximal entanglement growth~\cite{piroli_exact_2020,reid_entanglement_2021,zhou_maximal_2022,foligno_growth_2022}. While exactly-solvable many-body models typically require integrability and hence non-chaotic dynamics, these dual-unitary models are in a sense `maximally chaotic', allowing some generic features of quantum many-body chaos to be studied exactly~\cite{bertini_exact_2018, bertini_random_2021, kos_chaos_2021, fritzsch_eigenstate_2021, fritzsch_boundary_2021,ho_exact_2022, claeys_emergent_2022, ippoliti_dynamical_2022,fritzsch_boundary_2023}.

\newcommand\vc[1]{\begin{tabular}{c}#1\end{tabular}}

Most recent studies of dual-unitary dynamics have focused on 2-site dual-unitary gates arranged in a regular brickwork pattern. An alternative model has recently been proposed by Prosen~\cite{prosen_many-body_2021} who showed that dual-unitary interactions could also be arranged `round-a-face', a model we here name \emph{clockwork}\footnote{Since the hands of a clock go round-a-face.}, with a circuit representation in terms of 1-site unitaries controlled by both adjacent systems. We illustrate these circuits as follows:
\begin{calign}
\label{dubr_durf}
\begin{aligned}
\includegraphics[width=5cm]{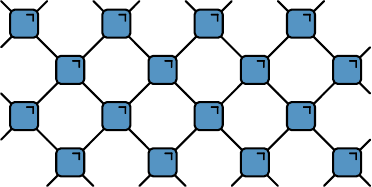}
\end{aligned}
&
\begin{aligned}
\includegraphics[width=5cm]{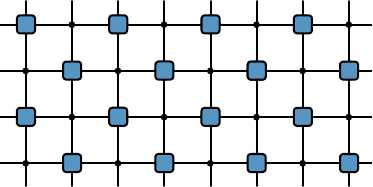}
\end{aligned}
\\ \nonumber
\textrm{\em (a) Conventional brickwork circuit}
&
\textrm{\em (b) Conventional clockwork circuit}
\end{calign}
These models have been shown to have similar global properties, supporting notions of unitarity along both the time (vertical) and space (horizontal) direction, and with vanishing single-site correlation functions everywhere except on the causal light cone. A common generalisation is therefore strongly suggested, and indeed Prosen writes ``it is an interesting open question if one can find mappings between [brickwork and clockwork] circuit models on an abstract level''.

Here we  present such a generalisation based on \textit{biunitary connections}, algebraic structures with a variety of applications in 2\-categorical linear algebra. We reason about these structures using the \textit{shaded calculus}~\cite{reutter_biunitary_2019, Lauda2006, Jones1999}, a powerful graphical system analogous to traditional tensor notation, but with the added feature of shadings assigned to certain diagram regions. In the simplest representation scheme, these shadings represent the assignment of a \textit{finite indexing set} to the region, and any wires and vertices that border the region are then indexed by that set. If any particular region is left unshaded, that means the region is assigned a 1\-element set; in this way the shaded calculus subsumes the traditional tensor notation, since the wires and vertices bordering such a region become trivially indexed. For every shaded calculus diagram we can compute a traditional representation in tensor notation; this is a nontrivial process which increases the apparent circuit complexity,  since we must add controlled gates to encode the indexing behaviour. 

Full details of the shaded calculus must wait until Section~\ref{sec:biunitary}. However, we can already use the shaded calculus to give a clear intuitive sense of our main results. Imposing for now spacetime homogeneity, there are two possible shaded circuit structures, either completely unshaded or completely shaded, as follows:
\begin{calign}
\label{fig:circuits_intro}
\begin{aligned}
\includegraphics[width=5cm]{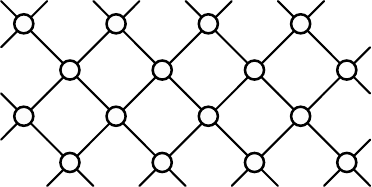}
\end{aligned}
&
\begin{aligned}
\includegraphics[width=5cm]{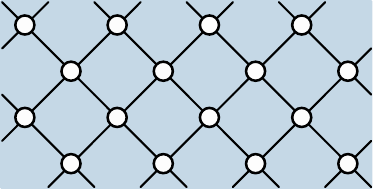}
\end{aligned}
\\ \nonumber
\textrm{\em (a) Shaded calculus brickwork circuit}
&
\textrm{\em (b) Shaded calculus clockwork circuit}
\end{calign}
Both these diagrams are drawn in the shaded calculus, which we indicate by using circles for the vertices, to contrast with the conventional circuit diagrams where we draw vertices as squares. Diagram~(2)(a)\ is a degenerate case of the shaded calculus since every region is trivially shaded; it can therefore be interpreted directly as an ordinary tensor diagram, yielding the traditional brickwork circuit (1)(a). In contrast, diagram (2)(b) has nontrivial shading, and computing its associated tensor representation gives precisely the clockwork circuit (1)(b).

This shaded calculus representation therefore yields a structural unification of the brickwork and clockwork models; while as conventional circuits they have very different structures, in the shaded calculus their representation is uniform, with 4-valent vertices stacked in a brickwork pattern. Furthermore, the dual unitarity property, which requires different definitions in the conventional brickwork and clockwork cases, is unified by the single concept of biunitarity in the shaded calculus representation.

However, our scheme offers more than notational unification, since spacetime homogeneity is not a requirement. Dropping this condition allows us to explore a rich family of exactly-solvable circuits, which can  include more complex interaction patterns and geometries, which have not previously been described. As a first example, we consider a simple diagonal boundary between clockwork and brickwork regions, giving on the left the shaded calculus representation, and on the right the conventional circuit model:

\begin{calign}
\label{fig:diagonal_intro}
\begin{aligned}
\includegraphics[width=7cm]{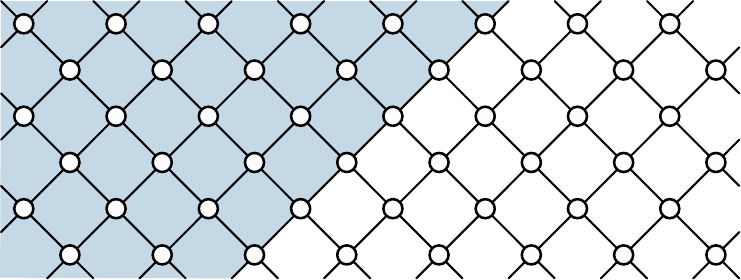}
\end{aligned}
&
\begin{aligned}
\includegraphics[width=7cm]{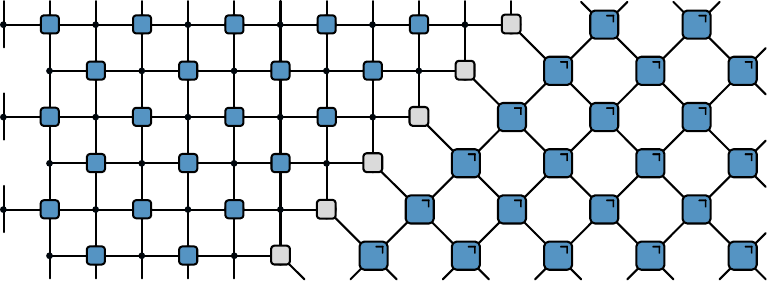}
\end{aligned}
\\ \nonumber
\textrm{\em (a) Shaded calculus representation}
&
\textrm{\em (b) Unitary circuit representation}
\end{calign}
This describes a \textit{dynamical boundary} between a clockwork region on the left of each picture, and a brickwork region on the right, separated by a boundary which moves left-to-right over time. In the diagram on the left, along the diagonal boundary, we see a new sort of biunitary vertex, with two adjacent shaded regions in its neighbourhood, and two adjacent unshaded regions. Previous results on biunitary connections tell us that such a vertex encodes the data of a \emph{quantum Latin square}\footnote{A quantum Latin square is an $n$-by-$n$ grid of elements of the Hilbert space $\mathbb{C}^n$, such that every row and column yields an orthonormal basis (see Definition~\ref{def:qls}.)}~\cite{musto_quantum_2016}, combinatorial objects of recent interest in quantum foundations. It follows that quantum Latin squares precisely characterise the dynamics of this clockwork-brickwork boundary.

Other interesting phenomena are possible. Here we begin at early times with a spatially homogeneous brickwork circuit, but at a certain spacetime point $P$ an operator inserts a clockwork region, the boundaries of which then propagate left and right along the causal light cone:

\begin{calign}
\label{fig:wedge_intro}
\begin{aligned}
\includegraphics[width=7cm]{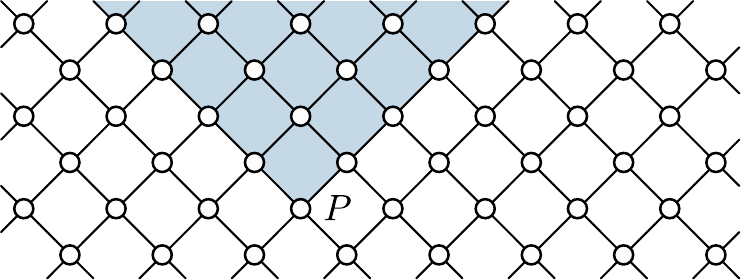}
\end{aligned}
&
\begin{aligned}
\includegraphics[width=7cm]{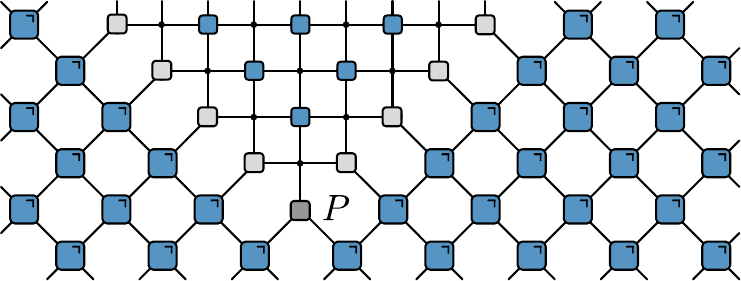}
\end{aligned}
\\ \nonumber
\textrm{\em (a) Shaded calculus representation}
&
\textrm{\em (b) Unitary circuit representation}
\end{calign}

\noindent
At point $P$ we see another shading pattern around the biunitary vertex, with one shaded and three unshaded regions. These biunitaries have been characterized as corresponding to \textit{unitary error bases}\footnote{A unitary error basis is family of unitary matrices which form an orthogonal basis of the operator space.}~\cite{vicary_higher_2012, Vicary_2012}, another important quantum combinatorial structure, introduced originally by Werner~\cite{werner_all_2001} for classifying quantum teleportation protocols.

A further scenario we are able to precisely characterize is the reflection of two incident boundaries with opposite velocities:

\begin{calign}
\label{fig:hadamard_intro}
\begin{aligned}
\includegraphics[width=7cm]{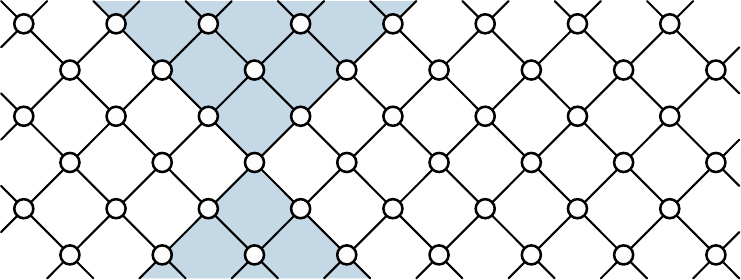}
\end{aligned}
&
\begin{aligned}
\includegraphics[width=7cm]{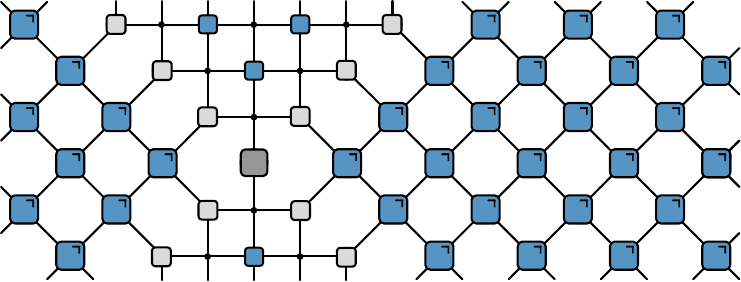}
\end{aligned}
\\ \nonumber
\textrm{\em (a) Shaded calculus representation}
&
\textrm{\em (b) Unitary circuit representation}
\end{calign}
Here the biunitarity property implies that the central reflection point is described by a complex Hadamard matrix, a result originally demonstrated by Jones in his work on subfactor theory~\cite{Jones1999}.

Another new possibility suggested by our approach is the construction of  homogenous homogeneous circuits with a regular shading pattern, such as the following, which makes further use of the Hadamard biunitary:

\begin{calign}
\label{fig:KIM_intro}
\begin{aligned}
\includegraphics[width=7cm]{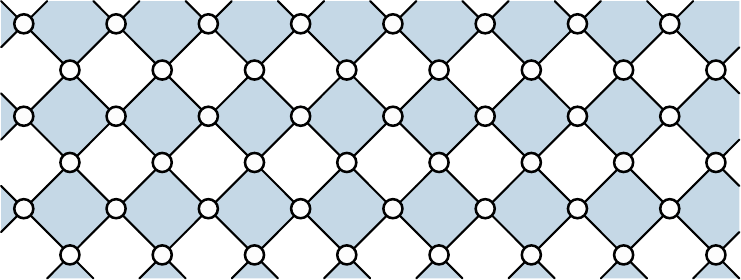}
\end{aligned}
&
\begin{aligned}
\includegraphics[width=7cm]{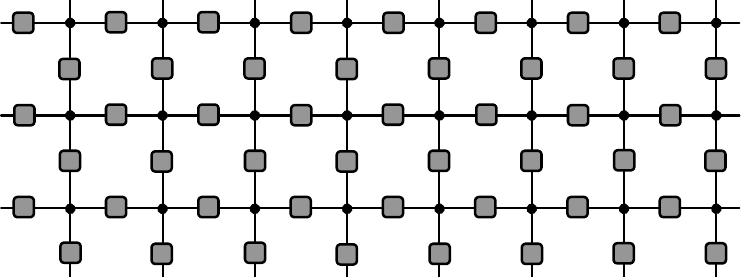}
\end{aligned}
\\ \nonumber
\textrm{\em (a) Shaded calculus representation}
&
\textrm{\em (b) Unitary circuit representation}
\end{calign}
Circuits of this form already appeared in the literature as \emph{ad hoc} decompositions of unitary circuits representing the `self-dual kicked Ising model' \cite{akila_particle-time_2016,bertini_entanglement_2019,gopalakrishnan_unitary_2019,ho_exact_2022,Stephen2022}. Using the shaded calculus, we see that this decomposition originates naturally from biunitarity, and the relationship to brickwork and clockwork circuits is made clear.

These phenomena can be combined in any spacetime orientation, to produce heterogeneous circuits with rich global structure, such as the following:

\begin{calign}
\label{fig:general}
\begin{aligned}
\includegraphics[width=7cm]{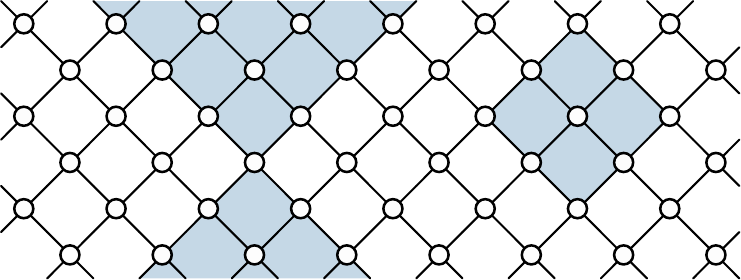}
\end{aligned}
&
\begin{aligned}
\includegraphics[width=7cm]{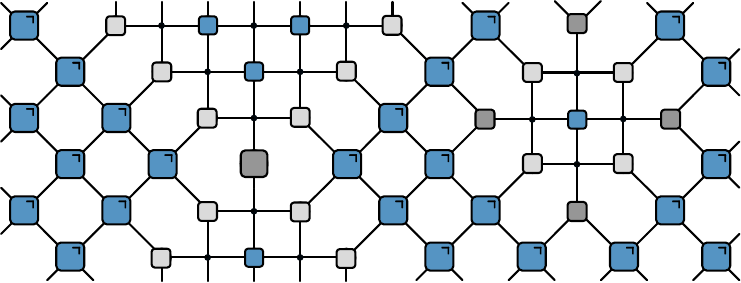}
\end{aligned}
\\ \nonumber
\textrm{\em (a) Shaded calculus representation}
&
\textrm{\em (b) Unitary circuit representation}
\end{calign}

Crucially, we show that all circuits produced by these techniques all satisfy a range of attractive properties relating to correlation functions and entanglement growth already known for the conventional brickwork and/or clockwork models. To achieve this, the key insight is a new  definition of solvable initial state appropriate for the shaded calculus, generalizing the standard notion of solvable matrix product state for conventional brickwork dual-unitary circuits. These solvable initial states allow us to derive exact results for entanglement dynamics of biunitary shaded circuits, and to prove that biunitary circuits exhibit exact thermalization after a finite number of time steps. In particular,  we obtain solvable initial states for clockwork circuits, where no such construction or results on entanglement growth were known so far.

\paragraph{Outline.} In \Cref{sec:biunitary} we introduce the shaded calculus and define biunitary brickwork circuits, focusing on the use of the graphical calculus to systematically and easily construct dual-unitary circuits, and present a dictionary between the shaded notation and the corresponding circuit representation. \Cref{sec:corr_entanglement} reviews relevant results from dual-unitary circuits on the dynamics of correlations, namely that single-site correlation functions vanish everywhere except on the light cone, which we show to hold more generally for biunitary circuits. In \Cref{sec:entanglement} we define solvable initial states for these biunitary circuits, showing that the resulting dynamics in discrete time results in exact thermalization after a finite number of time steps, and fully characterize these initial states for different circuits. In \Cref{sec:dimensions} we discuss the restrictions on the local Hilbert space dimensions enforced by heterogeneous circuit structures. \Cref{sec:conclude} presents outstanding questions and an outlook. In Appendix~\ref{app:param_KIM} we provide an explicit parameterisation of composite biunitaries arising in periodic heterogeneous structure.

\section{Shaded calculus and biunitarity}

\subsection{Introduction to the shaded calculus}
\label{sec:shaded}

In the graphical notation for linear algebra, tensors and tensor composites are represented in Penrose notation~\cite{penrose_penrose-applications--negative-dimensional-tensorspdf_1971}, as also popularized in the tensor network literature \cite{orus_practical_2014}. From the categorical perspective, the resulting diagrams can be interpreted as string diagrams for the pivotal monoidal category \textbf{Hilb} of finite-dimensional Hilbert spaces~\cite{Selinger_2010, heunen_2019}. In this notation, wires represent Hilbert spaces, and vertices represent linear maps between them, such that wiring diagrams represents composite linear maps. Diagrams such as the following are standard:
\begin{align}
\begin{aligned}
\includegraphics[width=0.25\textwidth]{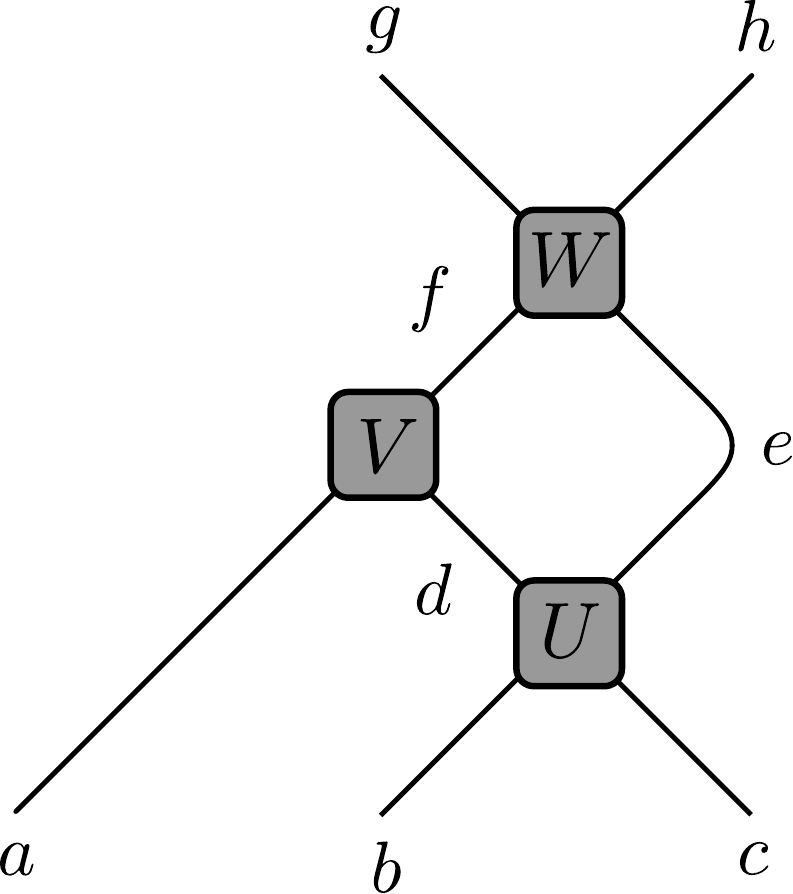}
\end{aligned}
\end{align}
This denotes a  composite of tensors $U$, $V$, $W$, with the indices $a$ through $h$ taken to represent basis elements of certain Hilbert spaces, whose dimensions are usually left implicit. Vertices represent complex numbers indexed by labels of adjacent wires; for example, the numbers $V_{a,d,f}$ represent the coefficients of the linear map $V$. There is an implicit sum over the index values of the closed wires $d,e,f$, and a direct sum over the remaining open wires, which gives the input and output Hilbert spaces of the composite.

Our work uses the shaded tensor calculus, which adds the feature of shaded regions. This is an instance of the planar graphical calculus for the dagger pivotal 2-category \textbf{2Hilb} of finite-dimensional 2--Hilbert spaces~\cite{Baez_1997, heunen_2019}, describing the theory of linear algebraic structures that can be composed in the plane, and also reflected and rotated (see introductions~\cite{Selinger_2010, heunen_2019} and research articles~\cite{hummon_surface_2012, bartlett_quasistrict_2014, schommer-pries_classification_2014, barrett_gray_2018, reutter_biunitary_2019}.) The shaded calculus is not as well-known as the traditional tensor notation, and so we introduce it carefully here. In an effort to demonstrate the simplicity of this notation, we choose a presentation that emphasizes its similarity to the traditional tensor calculus, avoiding explicit higher-categorical machinery.

An example of the shaded calculus is given in expression~\eqref{eq:shadedexample}(a), as follows:

\begin{calign}
\label{eq:shadedexample}
\begin{aligned}
\includegraphics[width=0.3\textwidth]{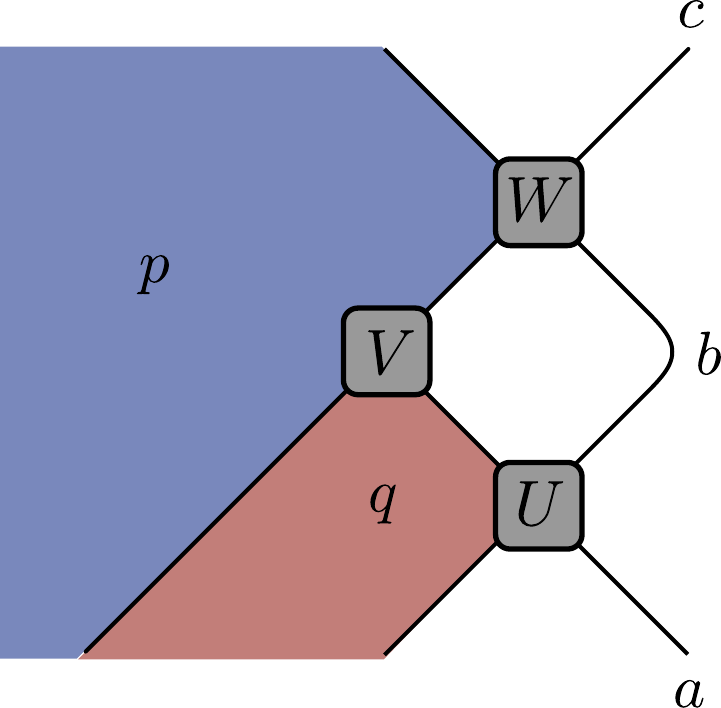}
\end{aligned}
&
\begin{aligned}
\includegraphics[width=0.1545\textwidth]{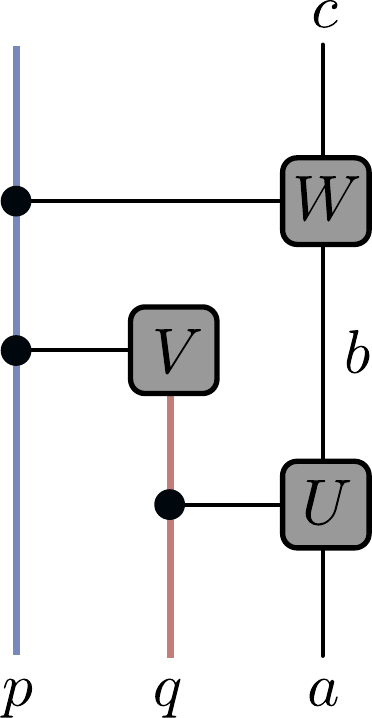}
\end{aligned}
\\ \nonumber
\textrm{\em (a) Shaded calculus diagram}
&
\textrm{\em (b) Controlled operator representation}
\end{calign}
This example has blue and red shaded regions, with parameters $p$ and $q$ respectively. These parameters are valued in some finite index set, determined by the dimension of the region, which we usually leave implicit. As with the conventional tensor notation discussed above, vertices represent complex numbers indexed by adjacent index labels, which can now include region indices as well as wire indices. The example above represents tensors $(U_{q})_{a,b}$, $V_{p,q}$ and $(W_p)_{b,c}$, where here we distinguish region indices by placing them within brackets, a notation which will be helpful later. Note that only bare wires carry an index; wires bordering shaded regions do not.\footnote{A natural generalisation of this calculus allows all wires to carry an index, including those wires bordering shaded regions, giving the full expressivity of the 2-category $\mathbf{2Hilb}$. However here we work entirely within a simpler fragment of the theory.} Again in common with the tensor calculus, we implicitly sum over parameters labelling closed regions, and take a direct sum over parameters labelling open regions.\footnote{Note that both regions in example \eqref{eq:shadedexample}(a) are open; we will see examples of closed regions below.} Unshaded regions can be considered to be trivially indexed over the 1-element set.

As a result, the data of diagram \eqref{eq:shadedexample}(a) can be equivalently expressed in traditional tensor notation as a family of \textit{controlled operators}, which we show as diagram~\eqref{eq:shadedexample}(b). Regions in the shaded diagram are now denoted as wires, which we draw here in the same colour, and vertices bordering the region are controlled by the corresponding wire. The control wires allow precisely the same indexing behaviour as encoded by the shaded diagram. However, note that when the shaded region is entirely within the input of a vertex, such as the region labelled $q$ in the input of $V$ above, then this is treated in  the circuit representation as an  input wire with no control; outputs behave similarly.

The benefits of the shaded calculus include the simplicity of the representation, and the stronger compositional properties, which we exploit heavily in this article. Of course, the traditional controlled tensor notation remains preferable in some instances, in particular where the connectivity is non-planar.

The present article studies the behaviour of 4-valent vertices with a variety of shading patterns. Using the techniques described above, any such shaded vertex can be directly represented as a controlled operator, and we will frequently make use of such a representation.

\subsection{Biunitary circuits}
\label{sec:biunitary}
In this Section, we review relevant results from Ref.~\cite{reutter_biunitary_2019} and define biunitarity for general vertices. 
We explicitly define the five biunitary building blocks of biunitary circuits: dual unitaries, unitary error bases, complex Hadamard matrices, quantum Latin squares, and quantum crosses. We then show how circuits of biunitary connections can be systematically constructed using the graphical calculus, in such a way that the resulting circuit is unitary along both the time and space direction.

\begin{defn}
A biunitary is defined as a vertex $U$
\begin{align}
\vcenter{\hbox{\includegraphics[width=0.15\linewidth]{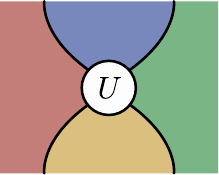}}}\,
\end{align}
with a conjugate vertex $U^{\dagger}$, such that these satisfy the ordinary notion of (vertical) unitarity
\begin{align}
&\vcenter{\hbox{\includegraphics[width=0.3\linewidth]{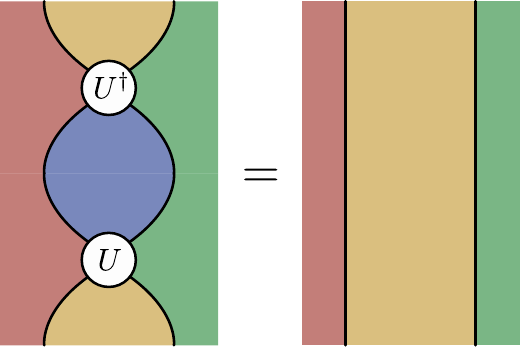}}}
&&
\vcenter{\hbox{\includegraphics[width=0.3\linewidth]{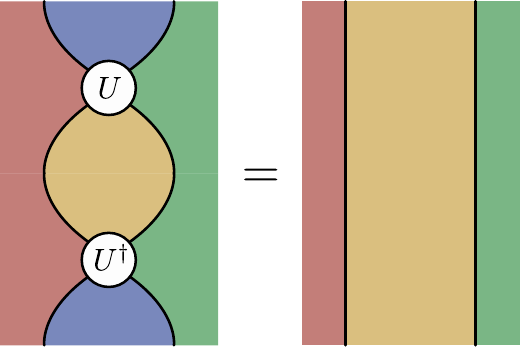}}}
\end{align}
as well as a notion of horizontal unitarity:
\begin{align}
&\vcenter{\hbox{\includegraphics[width=0.32\linewidth]{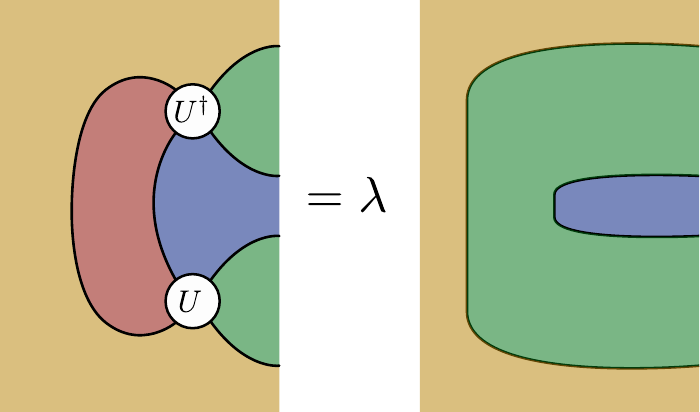}}}
&&
\vcenter{\hbox{\includegraphics[width=0.32\linewidth]{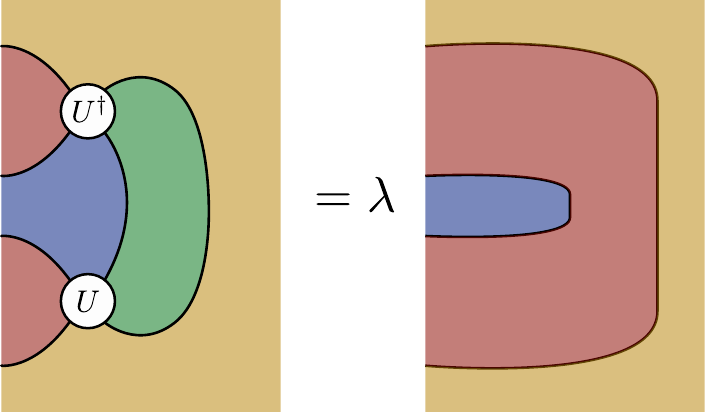}}}
\end{align}
The constant $\lambda$ can be uniquely determined from the biunitary and generally corresponds to a power of the local Hilbert space dimension $q$. We will typically omit such factors in our diagrams, such that all graphical equalities hold up to a scalar factor.
\end{defn}

\subsection{Quantum structures as biunitaries}
\label{sec:quantumstructures}

Here we unpack the definition of biunitarity for the different shading patterns and connections relevant for this work. In each case we describe the quantum combinatorial object that the biunitary corresponds to, and show explicitly how it is represented in controlled tensor notation.

\paragraph{Dual-unitary gates.}For a biunitary with no regions shaded, we obtain the ordinary tensor network concept of {dual-unitary gate}. These are two-site unitary matrices with matrix elements $U_{ab,cd}$ represented  as follows:\footnote{We denote matrix elements with commas and parentheses  where this helps to make the circuit representation more transparent.}
\begin{align}
U_{ab,cd} \,\,\,\,= \vcenter{\hbox{\includegraphics[width=0.1\linewidth]{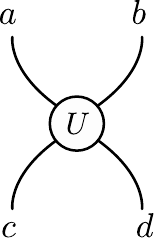}}}
\end{align}
For such a linear map to be biunitary, a necessary condition is that wires $c,b$ have the same dimension, and wires $a,d$ also have the same dimension. In this work we will make the simplifying assumption that all four wires have the same dimension when writing down explicit summations, but our results do not depend on this assumption. For a discussion of dual-unitary gates where the wires have different dimensions, we refer the reader to Ref.~\cite{Borsi2022}.

Vertical unitarity directly corresponds to unitarity of $U$, as follows:
\begin{align}
\vcenter{\hbox{\includegraphics[width=0.25\linewidth]{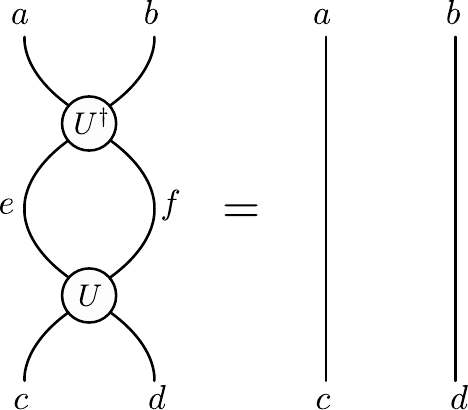}}} \qquad \Rightarrow \qquad \sum_{e,f=1}^q U^{\dagger}_{ab,ef} U_{ef,cd} = \delta_{ac}\delta_{bd}
\end{align}
Horizontal unitarity implies that  $\widetilde{U}_{ab,cd} := U_{bd,ac}$, the dual operator, is also unitary: 
\begin{align}
\vcenter{\hbox{\includegraphics[width=0.32\linewidth]{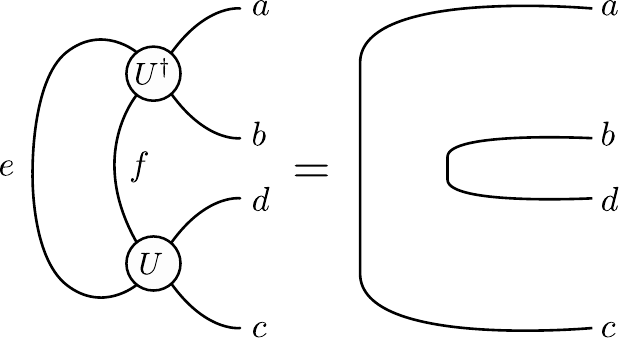}}}   \qquad \Rightarrow \qquad   \sum_{e,f=1}^q U^{\dagger}_{ea,fb} U_{ec,fd} =\sum_{e,f=1}^q \tilde{U}^{\dagger}_{ab,ef} \tilde{U}_{ef,cd} = \delta_{ac}\delta_{bd}
\end{align}

\noindent
\emph{Circuit representation.}
When representing biunitary circuits in terms of unitary gates, these vertices are denoted as two-site unitary gates and will be written as follows:
\begin{align}
 \vcenter{\hbox{\includegraphics[width=0.25\linewidth]{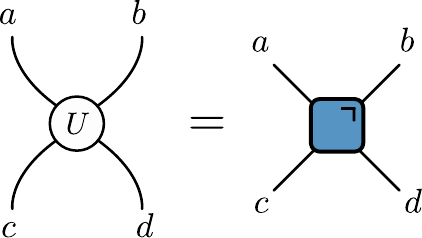}}}
\end{align}

\paragraph{Unitary error bases.}  Biunitary vertices with a single shaded region correspond to unitary error bases, important quantum algebraic structures defined as follows.
\begin{defn}
(Knill \cite{knill_non-binary_1996}) A \textit{unitary error basis} (UEB) is a complete orthogonal family of unitary matrices $\{U_a  \in U(H) | a=1 \dots q^2\}$, where orthogonality is with respect to the trace norm: $\Tr(U_a^{\dagger}U_b) = q \delta_{ab}$.
\end{defn}

\noindent
UEBs have found numerous applications in quantum information theory, discovered by Werner to underlie quantum teleportation \cite{werner_all_2001} as well as dense coding and error correction procedures~\cite{knill_non-binary_1996,shor_fault-tolerant_1996}. We characterise UEBs via biunitaries as follows.

\begin{lemma}[Vicary~\cite{Vicary_2012, vicary_higher_2012}]
Unitary error bases are precisely biunitaries with the following shading pattern:
\begin{align}
\label{shading_ueb}
(U_a)_{b,c} \quad = \quad \vcenter{\hbox{\includegraphics[width=0.12\linewidth]{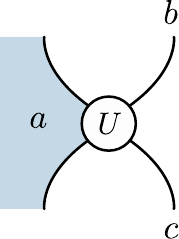}}}\,
\end{align}
\end{lemma}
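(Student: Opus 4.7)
The plan is to prove both directions of the equivalence by directly translating the biunitarity equations (vertical and horizontal) into component equations on the matrix elements $(U_a)_{b,c}$, using the shaded-to-controlled tensor dictionary from Section~\ref{sec:shaded}. Once the translation has been carried out, the forward direction (biunitary $\Rightarrow$ UEB) and the reverse direction (UEB $\Rightarrow$ biunitary) amount to the same chain of equivalences, so it suffices to run the calculation in one direction and then read it backwards.

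First I would analyse vertical unitarity. For the shading pattern in~\eqref{shading_ueb}, stacking $U$ with $U^{\dagger}$ in the vertical direction places the two shaded corners adjacent to one another, so they merge into a single open region whose parameter $a$ is shared between both copies; consequently the indexed wires get contracted in pairs while $a$ remains external. The resulting component equations read
\begin{align*}
\sum_c (U_a)^{*}_{b,c}\,(U_a)_{b',c} = \delta_{bb'}, \qquad \sum_b (U_a)^{*}_{b,c}\,(U_a)_{b,c'} = \delta_{cc'},
\end{align*}
for every value of $a$, which is precisely the statement that each matrix $U_a$ is unitary.

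Next I would analyse horizontal unitarity. Here the shaded regions of $U$ and $U^{\dagger}$ are arranged so that in one horizontal composition they fuse into an enclosed region, forcing the parameter $a$ to be summed over, while the wire indices $b,c$ remain external. This yields, up to the scalar $\lambda$, a completeness relation
\begin{align*}
\sum_{a} (U_a)^{*}_{b,c}\,(U_a)_{b',c'} = \lambda\,\delta_{bb'}\,\delta_{cc'}
\end{align*}
and, from the second horizontal composition, an orthogonality relation $\sum_{b,c}(U_a)^{*}_{b,c}(U_{a'})_{b,c} = \lambda\,\delta_{aa'}$, i.e.\ $\Tr(U_a^{\dagger}U_{a'}) = \lambda\,\delta_{aa'}$. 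Combined with the preceding step these are exactly the defining axioms of a UEB; a standard dimension count on the $q^2$-dimensional matrix space $M_q(\mathbb{C})$ with the Hilbert--Schmidt inner product then forces the shaded region's index set to have cardinality $q^2$ and fixes $\lambda = q$. The converse is immediate: given a UEB $\{U_a\}_{a=1}^{q^2}$, defining the vertex by its matrix elements $(U_a)_{b,c}$ satisfies both sets of component equations and hence is biunitary.

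The main obstacle is the geometric bookkeeping: correctly reading off the topology of the shaded regions in each of the four composite diagrams defining vertical and horizontal unitarity, and in particular distinguishing those compositions in which the shaded region remains a single open region (so its parameter stays external) from those in which it becomes enclosed (so the parameter is contracted). Once this step is handled using the conventions of Section~\ref{sec:shaded}, the translation into matrix equations and the identification with the UEB axioms are mechanical.
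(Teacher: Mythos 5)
Your proposal is correct and follows essentially the same route as the paper: unpack the vertical unitarity conditions into unitarity of each $U_a$, unpack the two horizontal unitarity conditions into the completeness relation $\sum_a (U_a)^{*}_{b,c}(U_a)_{b',c'} = q\,\delta_{bb'}\delta_{cc'}$ and the trace-orthogonality $\Tr(U_a^{\dagger}U_{a'}) = q\,\delta_{aa'}$, and identify these with the UEB axioms. You make slightly more explicit the bidirectionality of the argument and the Hilbert--Schmidt dimension count fixing the shaded region's index set to $q^2$ and $\lambda=q$ (which the paper records as a necessary-condition remark before the proof), but these are only presentational differences, not a different method.
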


\noindent
For such a vertex to be biunitary, a necessary condition is that the wires must carry the same dimension $q$, and the shaded region must carry the dimension $q^2$.

\begin{proof}
The vertical unitarity conditions correspond to each $U_{a}$ being a unitary matrix:
\begin{align}
\vcenter{\hbox{\includegraphics[width=0.25\linewidth]{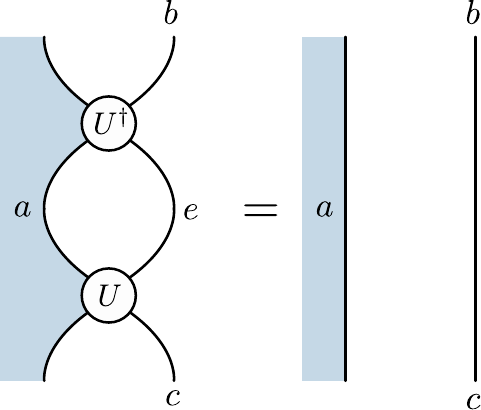}}} \qquad  \Rightarrow \qquad \sum_{e=1}^q (U_a)^{\dag}_{b,e} (U_a )_{e,c}   = \delta_{bc}\qquad \forall a=1\dots q^2\,,
\end{align}
The first horizontal unitarity condition guarantees the completeness of this basis, where we sum over the index corresponding to the closed region:
\begin{align}
\vcenter{\hbox{\includegraphics[width=0.32\linewidth]{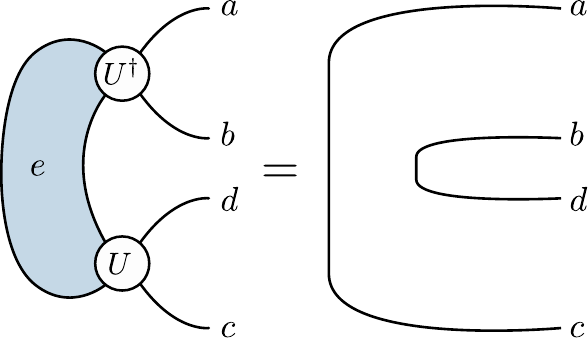}}} \qquad \Rightarrow \qquad \sum_{e=1}^{q^2} (U_e )^{\dagger}_{a,b} (U_e)_{d,c} = q\delta_{ac}\delta_{bd}
\end{align}
The second horizontal unitarity condition returns the trace-orthonormality of the different matrices, as follows:
\begin{align}
\vcenter{\hbox{\includegraphics[width=0.32\linewidth]{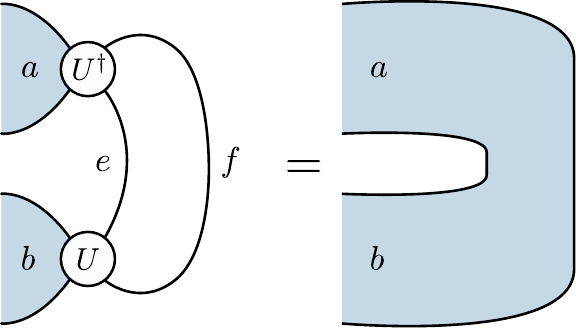}}}   \qquad  \Rightarrow \qquad  \sum_{e,f=1}^q ( U^{\dagger}_a )_{f,e} (U_b )_{e,f} = \Tr (U_a^{\dagger} U_b ) = q \delta_{ab}\,.
\end{align}
Alternatively, we can define $\tilde{U}_{a,bc} = (U_a)_{bc}$ as a linear map $\tilde{U} \in \mathbbm{C}^{q^2 \times q^2}$, and these conditions imply that $\tilde{U}$ is also unitary (note that $a=1\dots q^2$ whereas $b,c=1\dots q$).
\end{proof}

\vspace{5pt}
\noindent
\emph{Circuit representation.}
The representation of a  shaded calculus diagram as a controlled tensor network depends on the orientation of the shaded region. If the shaded region is on the left or right side we obtain a control wire, but not if it is above or below:
\begin{calign}
\vcenter{\hbox{\includegraphics[width=0.25\linewidth]{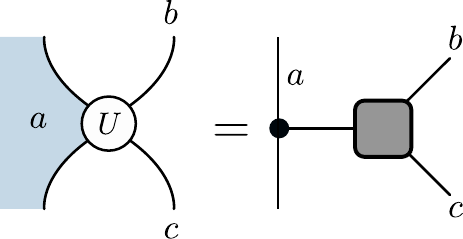}}}
&
\vcenter{\hbox{\includegraphics[width=0.25\linewidth]{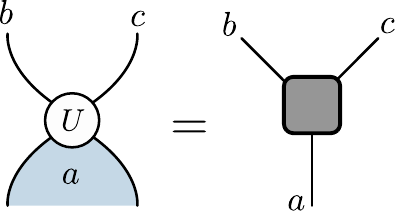}}}
\end{calign}

\paragraph{Hadamard matrices.} Here we investigate the definition of Hadamard structures via biunitaries. 
\begin{defn}
A \textit{Hadamard matrix} is a matrix $U \in \mathbbm{C}^{q \times q}$, which is proportional to a unitary matrix, such that every matrix entry has modulus~1. It follows that $U U^{\dagger} = U^{\dagger}U = q \mathbbm{1}$.
\end{defn}

\noindent
In the shaded calculus, they are represented as biunitaries with two opposite shaded regions.
\begin{lemma}[Jones~\cite{Jones1999}]
Hadamard matrices are precisely biunitaries with the following shading pattern:
\begin{align}
U_{a,b} \quad = \quad \vcenter{\hbox{\includegraphics[width=0.12\linewidth]{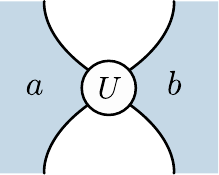}}}\,
\end{align}
\end{lemma}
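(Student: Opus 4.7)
The plan is to mirror the proof of the preceding UEB lemma by unpacking each biunitarity condition as an explicit tensor equation and showing that together these equations are equivalent to the Hadamard conditions $UU^\dagger = q \mathbbm{1}$ and $|U_{a,b}|^2 = 1$ for all $a,b$. I would first fix conventions: both shaded regions have dimension $q$ matching the wires, and the matrix element $U_{a,b}$ is indexed by the two edges separating the shaded regions from the unshaded ones. Once the forward direction is established, the converse (Hadamard $\Rightarrow$ biunitary) follows immediately by running the same computations in reverse.

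For the vertical unitarity equation I would stack $U^\dagger$ above $U$, producing an internal composition edge that is summed and an identity constraint on the remaining indices. With this shading, the tensor equation reads $\sum_c U_{a,c}\,\bar{U}_{b,c} = q\,\delta_{a,b}$, i.e.\ $UU^\dagger = q\mathbbm{1}$, so $U$ is proportional to a unitary. Horizontal unitarity is the structurally more interesting condition: stacking $U$ and $U^\dagger$ side by side causes the wire indices $a,b$ to be fixed rather than summed, while the closed shaded region contributes a single factor of $q$. Unpacking yields the pointwise relation $|U_{a,b}|^2 = 1$ for each individual pair $(a,b)$. The two horizontal orientations (the two sub-equations appearing in the biunitarity definition) collapse to the same condition here, thanks to the invariance of the shading pattern under 90-degree rotation.

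The main subtlety is correctly tracking shading adjacencies during composition: unlike the UEB case, every edge of this vertex sits between a shaded and an unshaded region, so wire and region parameters participate nontrivially in each relation, and one must be careful that the pointwise nature of the unit-modulus condition is genuinely recovered (rather than only its $a=b$ or traced form). The conceptual payoff is that the two seemingly independent defining properties of a Hadamard matrix---global orthogonality up to scale, and pointwise unit modulus---emerge respectively from the vertical and horizontal directions of biunitarity, which is precisely Jones' original observation from subfactor theory.
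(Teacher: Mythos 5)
Your overall strategy---unpack each biunitarity axiom as a tensor equation and observe that together they yield exactly the two defining properties of a complex Hadamard matrix---is the same as the paper's. However, you have the assignment of axiom to property \emph{reversed} relative to the paper's shading convention. In the paper's figure the two shaded regions lie to the left and right of the vertex, so that vertical composition $U^{\dagger}U$ leaves both region indices free with no enclosed shaded region, giving the \emph{pointwise} condition $U_{a,b}(U_{a,b})^{\dagger}=|U_{a,b}|^2=1$; it is horizontal composition that closes a shaded region and produces the summed relation $\sum_e (U_{e,a})^{\dagger}U_{e,b}=q\,\delta_{ab}$. You attribute the orthogonality relation to vertical unitarity and the unit-modulus relation to horizontal unitarity, which corresponds to the opposite orientation (shaded regions above and below). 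Since you were working blind this is an understandable guess, but it does not match the diagram the lemma refers to, and your final sentence (``emerge respectively from the vertical and horizontal directions'') is therefore wrong as stated.

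A second, more conceptual slip: you justify the collapse of the two horizontal sub-equations by ``invariance of the shading pattern under 90-degree rotation.'' The Hadamard shading pattern is \emph{not} fixed by a 90-degree rotation---that rotation interchanges the two orientations (left/right shaded versus top/bottom shaded), which is precisely why vertical and horizontal unitarity give genuinely different tensor equations. The reason the two horizontal sub-equations give equivalent content is simpler: for a square matrix $U$, $U^{\dagger}U=q\,\mathbbm{1}$ and $UU^{\dagger}=q\,\mathbbm{1}$ are equivalent statements, and the two diagrams are related by hermitian conjugation (equivalently a 180-degree rotation), not a 90-degree one. Everything else, including the remark that the converse direction follows because each unpacking step is an equivalence, is fine and consistent with the paper.
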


\noindent
For such a vertex to be biunitary, a necessary  condition is that the regions $a$ and $b$ must have the same dimension.

\begin{proof}
Vertical unitarity fixes all matrix elements to have modulus one:
\begin{align}
\vcenter{\hbox{\includegraphics[width=0.25\linewidth]{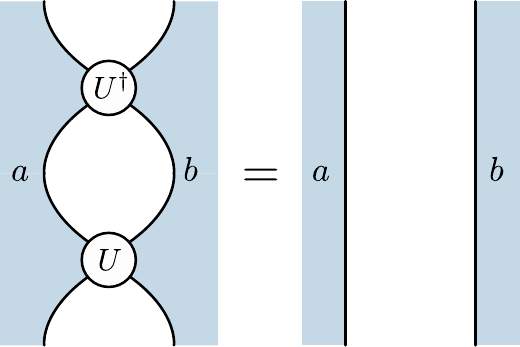}}} \qquad  \Rightarrow \qquad U_{a,b} (U_{a,b})^{\dagger} = |U_{a,b}|^2 = 1,\qquad \forall a,b=1\dots q\,,
\end{align}
Horizontal unitarity fixes these matrices to be proportional to unitary matrices, where now $\lambda = q$: 
\begin{align}
\vcenter{\hbox{\includegraphics[width=0.32\linewidth]{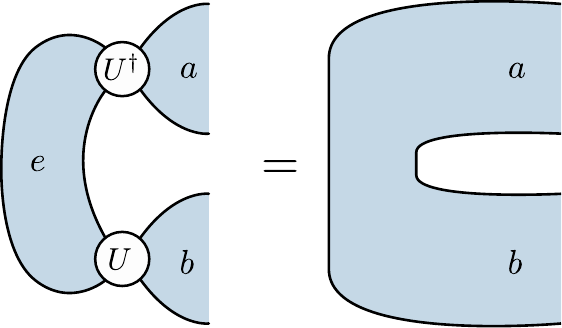}}} \qquad \Rightarrow \qquad \sum_{e=1}^{q} (U_{e,a} )^{\dagger} (U_{e,b}) = q\delta_{ab}\,,
\end{align}
This completes the proof.
\end{proof}

\vspace{5pt}
\noindent
\emph{Circuit representation.}
In the language of quantum circuits, a complex Hadamard matrix with vertical shaded regions corresponds to a 2-controlled phase $U_{a,b}$ with $a,b$ acting as control parameters. A complex Hadamard matrix with horizontal shaded regions corresponds to a one-site unitary gate $U / \sqrt{q}$, and both are represented by
\begin{calign}
\vcenter{\hbox{\includegraphics[width=0.33\linewidth]{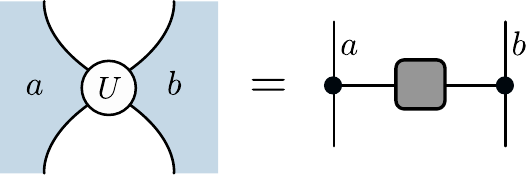}}}
&
\vcenter{\hbox{\includegraphics[width=0.184\linewidth]{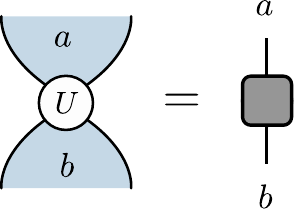}}}
\end{calign}

\paragraph{Quantum Latin squares.} Biunitaries can also be used to give a definition of quantum Latin square, as we now explore.

\begin{defn}
\label{def:qls}
(Musto \& Vicary \cite{musto_quantum_2016}) For a $q$-dimensional Hilbert space $H$, a \textit{quantum Latin square} (QLS) is a square grid of states  $\{Q_{a,b}\in H\,|a,b=1\dots q\}$, such that each row and column yields an orthonormal basis.
\end{defn}

\begin{lemma}[Jones~\cite{Jones1999}]
Quantum Latin squares are precisely biunitaries with two adjacent shaded regions:\begin{align}
(U_{a,b})_{c} \quad=\quad \vcenter{\hbox{\includegraphics[width=0.12\linewidth]{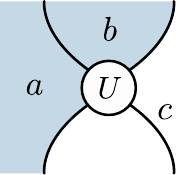}}}
\end{align}
\end{lemma}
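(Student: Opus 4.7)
The plan is to read off the data encoded in a biunitary with this shading pattern, and then to translate the two vertical and two horizontal unitarity equations into tensor identities on the components $(U_{a,b})_c$. I will show that these identities are precisely the four conditions of Definition~\ref{def:qls}: orthonormality and completeness of the rows, together with orthonormality and completeness of the columns, of the grid of states $|Q_{a,b}\rangle := \sum_c (U_{a,b})_c \,|c\rangle$.

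First I identify the vertex data. With two adjacent regions shaded, three of the four wires meeting the vertex border a shaded region and therefore carry no index; only the single wire separating the two unshaded regions is bare. The data attached to the vertex is thus one scalar $(U_{a,b})_c$ for each triple, indexed by the two shaded-region parameters $a,b$ and by the bare-wire parameter $c$. A dimension count on the biunitary axioms then forces $a$, $b$, and $c$ to range over a common set of size $q$, matching the QLS setup.

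Using the graphical-to-tensor dictionary already illustrated in the UEB and Hadamard proofs above, I would then unpack each of the four unitarity diagrams for this specific shading. The two vertical unitarity equations should yield, for every fixed $a$, the orthonormality $\sum_c \overline{(U_{a,b'})_c}\,(U_{a,b})_c = \delta_{bb'}$ and the corresponding row completeness $\sum_b (U_{a,b})_c\,\overline{(U_{a,b})_{c'}} = \delta_{cc'}$. The two horizontal unitarity equations are the ninety-degree rotations and give the column versions, $\sum_c \overline{(U_{a',b})_c}\,(U_{a,b})_c = \delta_{aa'}$ and $\sum_a (U_{a,b})_c\,\overline{(U_{a,b})_{c'}} = \delta_{cc'}$, all understood up to the normalisation constant $\lambda$. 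Together the four equations say exactly that for every fixed $a$ the family $\{|Q_{a,b}\rangle\}_b$ is an orthonormal basis of $H$, and for every fixed $b$ the family $\{|Q_{a,b}\rangle\}_a$ is an orthonormal basis of $H$, which is the defining property of a quantum Latin square. The converse is immediate, since the translation between diagrams and component equations is an equivalence.

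The main obstacle is purely combinatorial bookkeeping. One must carefully check, for this specific two-adjacent-shaded pattern, which cap/cup in each of the four biunitarity diagrams contracts the bare wire $c$ and which contracts one of the shaded indices, so that the diagrams together cover all four QLS conditions without redundancy or omission. Once this correspondence between each biunitarity diagram and one of the four row/column statements is pinned down, the argument is a direct component-wise reading in complete parallel with the UEB and Hadamard cases above, and no further calculation is required.
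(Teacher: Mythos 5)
Your proposal is correct and follows essentially the same approach as the paper: identify the vertex data $(U_{a,b})_c$ with two region indices and one wire index all of dimension $q$, then read the two vertical unitarity equations as orthonormality and completeness of the row families and the two horizontal unitarity equations as the analogous column conditions. The paper's proof writes out the two vertical conditions explicitly and then states that horizontal unitarity gives the column versions by the same reasoning, which is exactly the bookkeeping step you defer; your mapping of conditions to diagrams agrees with the paper's up to a harmless relabeling of which vertical equation is called first and which second.
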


\noindent
For such a vertex to be biunitary, a necessary condition is that the wire and both regions all have the same dimension $q$.

\begin{proof}
The first vertical unitarity equation gives completeness of the elements of each row:
\begin{align}
\vcenter{\hbox{\includegraphics[width=0.25\linewidth]{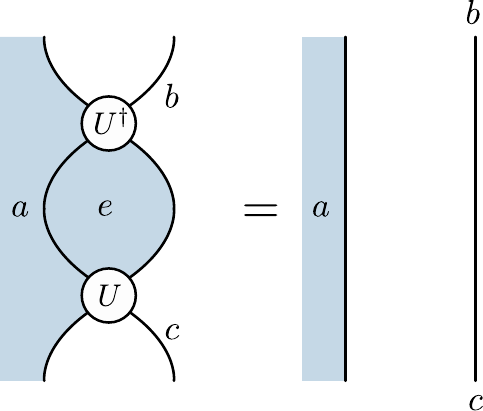}}} \qquad  \Rightarrow \qquad \sum_{e=1}^q (U_{a,e})^{\dagger}_{b} (U_{a,e} )_{c}  = \delta_{bc},\qquad \forall a=1\dots q\,
\end{align}
The second vertical unitarity returns the orthonormality of the elements of each row:
\begin{align}
\vcenter{\hbox{\includegraphics[width=0.25\linewidth]{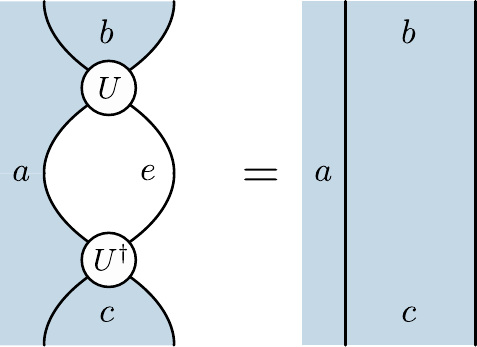}}} \qquad  \Rightarrow \qquad \sum_{e=1}^q  (U_{a,b} )_{e} (U_{a,c})^{\dagger}_{e}  = \delta_{bc},\qquad \forall a=1\dots q\,
\end{align}
Horizontal unitarity leads to the corresponding relation for the columns of $U$.
\end{proof}

\vspace{5pt}
\noindent
\emph{Circuit representation.}
When representing a  quantum Latin square as a circuit element, the control wire corresponds to the shaded region which is oriented to the side. The possible circuit elements are therefore as follows, depending on the orientation of the biunitary:
\begin{calign}
 \vcenter{\hbox{\includegraphics[width=0.25\linewidth]{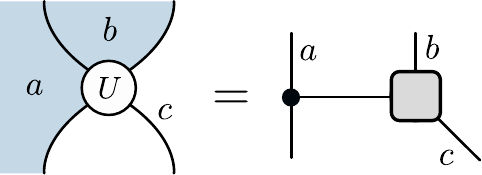}}}
&
 \vcenter{\hbox{\includegraphics[width=0.25\linewidth]{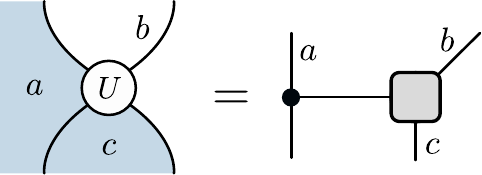}}}
\\[5pt]
 \vcenter{\hbox{\includegraphics[width=0.25\linewidth]{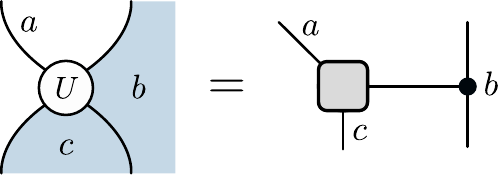}}}
 &
 \vcenter{\hbox{\includegraphics[width=0.25\linewidth]{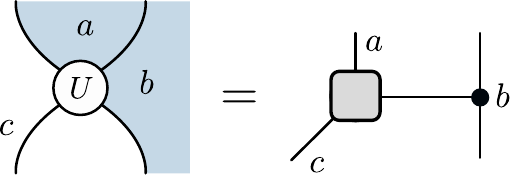}}}
\end{calign}

\paragraph{Quantum crosses.} The final combinatorial structure we will consider is the quantum cross.
\begin{defn}
\label{def:quantumcross}
On a $q$-dimensional Hilbert space $H$, a \textit{quantum cross} is a collection of $q^2$ unitary matrices $\{U_{a,c}  \in U(H) | a,c=1 \dots q\}$ with matrix elements  $(U_{a,c})_{b,d}$, such that the matrices $\tilde{U}_{b,d}$ with matrix elements $(\tilde{U}_{b,d})_{a,c} = (U_{a,c})_{b,d}$ are also unitary.
\end{defn}

\noindent
Quantum crosses correspond precisely to the DUIRF gates of Prosen~\cite[Section~III.C]{prosen_many-body_2021}, and we can characterise them in terms of biunitaries as follows.

\begin{lemma}
Quantum crosses are precisely biunitaries with all regions shaded:
\begin{align}
(U_{a,c})_{b,d} \quad=\quad \vcenter{\hbox{\includegraphics[width=0.12\linewidth]{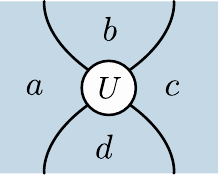}}}\,
\end{align}
\end{lemma}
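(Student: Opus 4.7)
My plan is to follow the template used in the preceding three lemmas (UEB, Hadamard, and quantum Latin square): decode each of the four biunitarity equations graphically in terms of the region labels adjacent to the vertex, and match the resulting conditions on the entries $(U_{a,c})_{b,d}$ against the two clauses of Definition~\ref{def:quantumcross}. The first step is to fix a labelling convention for the picture. Since all four regions meeting at the vertex are shaded, there are no bare wires and every index carried by the tensor is a region label; I take (say) the north and south regions to carry the labels $a$ and $c$, and the west and east regions to carry $b$ and $d$, so that the vertex is decoded exactly as the displayed symbol $(U_{a,c})_{b,d}$. Each biunitarity equation then closes one pair of region indices and leaves the other pair free.

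Next I would unpack vertical unitarity. Stacking the vertex on top of its conjugate glues the two north regions and the two south regions, pinning the pair $(a,c)$ and summing over the interior east/west region labels; by exactly the diagrammatic manipulation used in the UEB and QLS proofs, this yields $\sum_{b} (U_{a,c})^{\dagger}_{d',b} (U_{a,c})_{b,d} = \delta_{d'd}$ for every fixed $(a,c)$, together with the transposed identity coming from the second vertical equation. Together these assert that each $U_{a,c}$ is unitary, which is the first clause in the definition of a quantum cross.

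The two horizontal unitarity equations are then obtained from the vertical ones by a ninety-degree rotation of the diagram, which interchanges the roles of the pairs $(a,c)$ and $(b,d)$; by the same argument they yield unitarity of the dual matrices $\tilde{U}_{b,d}$ defined by $(\tilde{U}_{b,d})_{a,c} := (U_{a,c})_{b,d}$, which is the second clause. Reading all equivalences in the reverse direction establishes the converse: any quantum cross determines a biunitary with this shading pattern. I do not anticipate a substantive obstacle here; each of the four unitarity equations reduces to the same summation-over-closed-regions manipulation already carried out in the earlier lemmas, and the rotational symmetry of the all-shaded pattern makes the horizontal conclusion a free consequence of the vertical one. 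The only real care required is bookkeeping: making sure the region-to-index dictionary is kept consistent across all four equations, so that the extracted identities come out in exactly the form of Definition~\ref{def:quantumcross} rather than a transposed or conjugated variant.
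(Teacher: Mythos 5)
Your overall plan and final conclusions agree with the paper's proof: vertical unitarity encodes unitarity of each $U_{a,c}$, horizontal unitarity encodes unitarity of each $\tilde{U}_{b,d}$, and together these are the two clauses of Definition~\ref{def:quantumcross}. However, your geometric account of vertical composition is incorrect, which makes the middle of your argument internally inconsistent even though the stated endpoint equations are right.

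When $U^\dagger$ is stacked directly above $U$, there is a \emph{single} interior region, namely the north--south junction (the north region of $U$ is identified with the south region of $U^\dagger$), and it is this region that gets summed; the east and west regions each span both vertices and are therefore pinned and carried to the identity diagram. You assert the opposite, saying that stacking ``glues the two north regions and the two south regions, pinning the pair $(a,c)$ and summing over the interior east/west region labels.'' With your labelling ($a,c$ on north/south, $b,d$ on west/east), the correct vertical unitarity equation is actually
\begin{equation*}
\sum_{m} \bigl(U_{m,a'}\bigr)^{*}_{b,d}\bigl(U_{m,a}\bigr)_{b,d} \;=\; \delta_{a'a} \qquad \text{for all } b,d,
\end{equation*}
which is unitarity of the dual family $\tilde{U}_{b,d}$, not of $U_{a,c}$. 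The equation you wrote down, $\sum_{b}(U_{a,c})^{\dagger}_{d',b}(U_{a,c})_{b,d}=\delta_{d'd}$, is the one the paper derives, but it requires the opposite assignment: $a,c$ must label the east/west regions (which become the two control wires in the circuit representation) and $b,d$ the north/south regions. Because the all-shaded vertex and Definition~\ref{def:quantumcross} are symmetric under a $90^{\circ}$ rotation this mislabelling does not affect the truth of the lemma, but as written your derivation quietly establishes the horizontal condition under the name ``vertical'' and vice versa, so the intermediate reasoning does not support the equations you claim it yields.
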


\noindent
For such a vertex to be biunitary, a necessary condition is for opposite regions to have the same dimension.

\begin{proof}
Vertical unitarity corresponds to ordinary unitarity of the matrices $U_{a,b}$:
\begin{align}
\vcenter{\hbox{\includegraphics[width=0.25\linewidth]{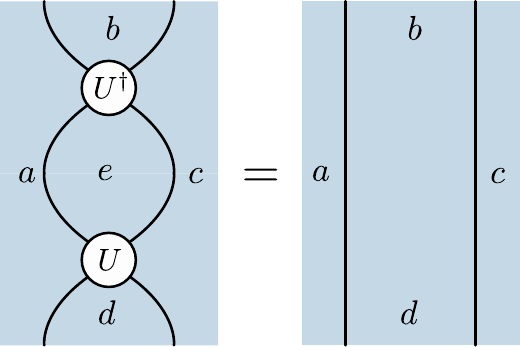}}} \qquad  \Rightarrow \qquad \sum_{e=1}^q (U_{a,c})^{\dagger}_{b,e} (U_{a,c})_{e,d}  = \delta_{bd},\qquad \forall a,c=1\dots q\,
\end{align}
Horizontal unitarity corresponds to unitarity of the matrices $\tilde{U}$ given in Definition~\ref{def:quantumcross}:
\begin{align}
\vcenter{\hbox{\includegraphics[width=0.32\linewidth]{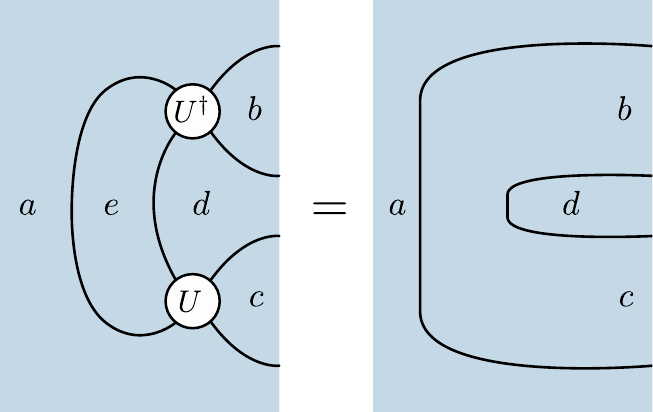}}}   \,  \Rightarrow \,  \sum_{e=1}^q (U_{e,b})^{\dagger}_{a,d}(U_{e,c})_{d,a} &= \sum_{e=1}^q (\tilde{U}_{d,a})^{\dagger}_{b,e} (\tilde{U}_{d,a} )_{e,c}= \delta_{bc}\,, \nonumber\\
& \forall a,d=1\dots q.
\end{align}
This completes the proof.
\end{proof}

\vspace{5pt}
\noindent
\emph{Circuit representation.}
When interpreting quantum crosses as circuit components, we obtain single-site double-controlled unitary gates where the 2 outer wires correspond to control parameters, i.e.
\begin{align}
 \vcenter{\hbox{\includegraphics[width=0.3\linewidth]{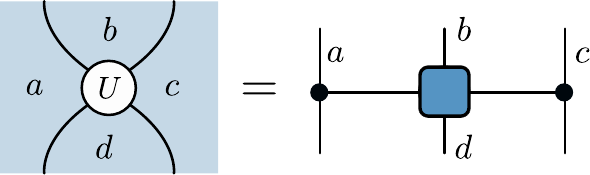}}}
\end{align}

\paragraph{Three shaded regions.} The above examples fully exhaust all biunitary connections: either no shaded regions, a single shaded region, or four shaded regions. The only remaining option would be a biunitary with three shaded regions, having the following graphical form:
\begin{align}
(U_{a,c})_{b}
\quad=\quad
\vcenter{\hbox{\includegraphics[width=0.12\linewidth]{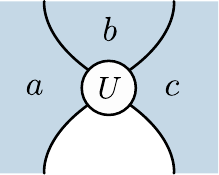}}}\,
\end{align}
However, the vertical unitarity condition implies that the the region labelled $b$ has dimension 1, and so this region is in fact unshaded, and we recover the Hadamard shading pattern.

\paragraph{Completeness.} To conclude this section, we note that all possible biunitaries for local Hilbert space dimension $q=2$ are fully known. In this case a complete parameterization was given for dual-unitary gates by Bertini \emph{et al.}~\cite{bertini_exact_2019}, and for quantum crosses by Prosen~\cite{prosen_many-body_2021}. For $q=2$ the unitary error bases are equivalent to the Pauli basis \cite{klappenecker_unitary_2003}, and complex Hadamard matrices are equivalent to the Fourier matrices~\cite{tadej_concise_2006}. In a $2 \times 2$ quantum Latin square a single vector can be freely chosen, after which orthonormality fixes all other vectors up to phase. 

For Hilbert space dimensions larger than 2 complete parametrizations of general biunitary connections remain absent in all cases. 
However, biunitaries for a larger Hilbert space can be systematically constructed: the diagonal composition of biunitaries is again biunitary, such that it is possible to compose biunitaries with a specific shading pattern out of other biunitaries which may have a different shading pattern. Such constructions are the main topic of Ref.~\cite{reutter_biunitary_2019}, and were already used in Ref.~\cite{claeys_emergent_2022} to construct specific classes of dual-unitary gates.

\subsection{Biunitary circuits}
The introduction of these different biunitaries now allows us to present the main result of this work. Dual-unitary `brickwork' circuits are circuits of dual-unitary gates, which we take to be arranged with the specified space-time indexing:
\begin{align}
\label{fig:indexed}
\vcenter{\hbox{\includegraphics[width=0.6\linewidth]{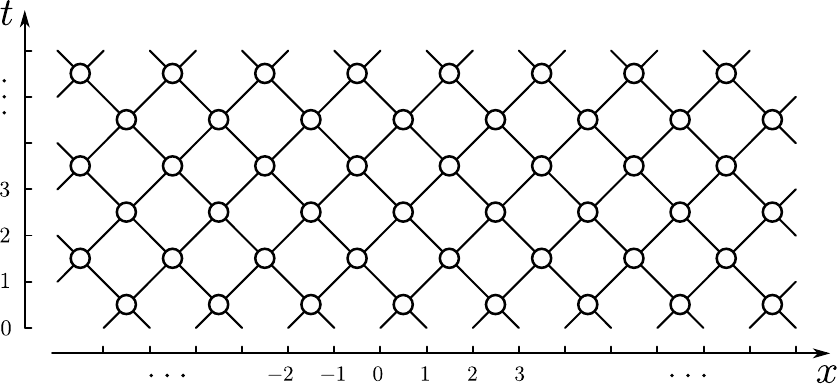}}}  
\end{align}
These circuits are unitary when acting along the vertical direction and acting along the horizontal direction, which has drastic consequences when we interpret these circuits as describing unitary dynamics of a discrete lattice $x$ (horizontal direction) along a discrete time $t$ (vertical direction). Every layer of dual-unitary gates can be interpreted as describing a single time step, and the resulting dynamics is exactly solvable \cite{bertini_exact_2019}. We now generalize this notion to biunitary circuits.

\begin{defn}
A \textit{biunitary circuit} is a brickwork circuit constructed from biunitary vertices.
\end{defn}

For convenience, we represent these biunitary circuits with a transparent gray background, with the implicit assumption that every vertex is biunitary.
\begin{align}
\label{fig:def_biunitarycircuit}
\vcenter{\hbox{\includegraphics[width=0.6\linewidth]{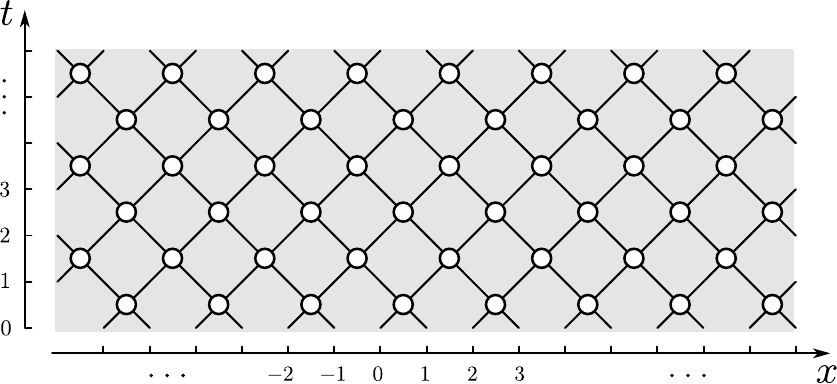}}}  
\end{align}
In practice, biunitary circuits can be directly constructed by taking a brickwork diagram and introducing shading patterns following a tiling pattern consistent with biunitarity. All boundaries between shaded and unshaded regions must correspond to one of the presented biunitary connections, which prohibits the appearance of vertices bordering on three shaded regions. We emphasize that the transparent gray background can indicate both a shaded or an unshaded region, with the only restriction being that all vertices are biunitary.

These circuits are again unitary both along the vertical and horizontal direction. This property is a direct consequence of the fact that arbitrary finite diagonal composites of biunitaries are again biunitary~\cite{reutter_biunitary_2019}. The described shading patterns guarantee that all biunitaries are arranged in such a way that all compositions are diagonal compositions, such that the total circuit is biunitary and can be seen as a dual-unitary transformation. This property is immediately clear from the graphical notation: biunitary circuits represent unitary transformations by construction, as implied by the fact that every biunitary connection can be represented as a (controlled) unitary, and rotating the circuits by 90$^{\circ}$ returns an equally valid biunitary circuit, which is hence also unitary.

We illustrate some possible shading patterns below, combined with their representation as a quantum circuit consisting of (controlled) unitary gates. A fully unshaded circuit corresponds to a brickwork circuit of dual-unitary gates. A fully shaded circuit corresponds to a circuit composed of 2-controlled 1-qubit gates, returning the previously studied dual-unitary interactions round-a-face (clockwork) circuits:
\begin{calign}
\begin{aligned}
\includegraphics[width=7cm]{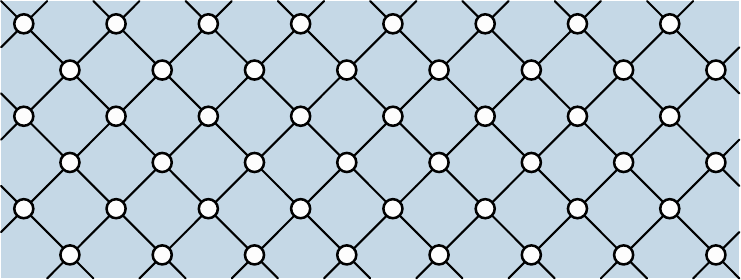}
\end{aligned}
&
\begin{aligned}
\includegraphics[width=7cm]{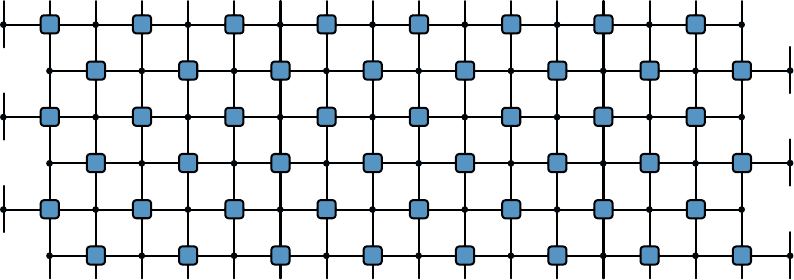}
\end{aligned}\quad
\end{calign}
Dual unitary clockwork and brickwork circuits can be `glued' together by including a diagonal boundary of quantum Latin squares, here corresponding to a boundary of 1-controlled 1-site gates:
\begin{calign}
\begin{aligned}
\includegraphics[width=7cm]{fig_diagonal_a}
\end{aligned}
&
\begin{aligned}
\includegraphics[width=7cm]{fig_diagonal_b}
\end{aligned}\quad
\end{calign}
The intersection of two such boundaries corresponds to a unitary error basis. If the shaded region lies along the top or bottom the UEB corresponds to a unitary gate mapping two $q$-dimensional wires to a single $q^2$-dimensional wire:
\begin{calign}
\begin{aligned}
\includegraphics[width=7cm]{fig_wedge_a}
\end{aligned}
&
\begin{aligned}
\includegraphics[width=7cm]{fig_wedge_b}
\end{aligned}\quad
\end{calign}
If the shaded region of the unitary basis lies along the left or right side it corresponds to a 1-controlled 1-site unitary:

\begin{calign}
\begin{aligned}
\includegraphics[width=7cm]{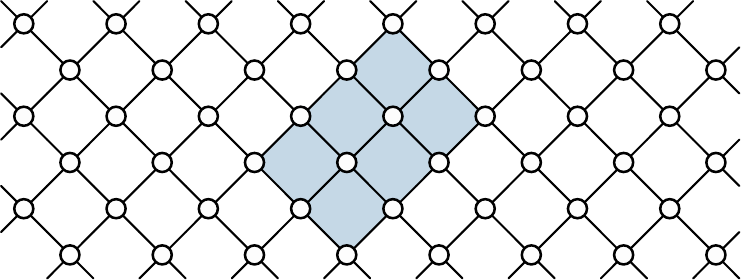}
\end{aligned}
&
\begin{aligned}
\includegraphics[width=7cm]{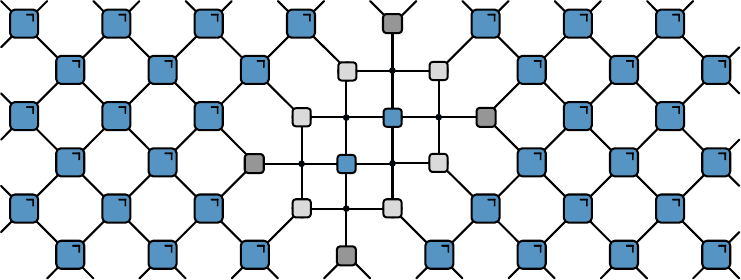}
\end{aligned}\quad
\end{calign}
Note that purely horizontal wires correspond to `control' parameters for the bordering unitaries, whereas all wires with a vertical components are either the input or output of a (possibly controlled) unitary gate. Both in the shaded calculus and in the unitary gate representation biunitary circuits are clearly mapped to biunitary circuits when exchanging the role of time and space, i.e. when rotating the full circuit by 90$^{\circ}$.

\subsection{Periodic biunitary circuits}
\label{subsec:periodic}
As one final example we consider a circuit with a regular shading pattern, periodic in both time and space. We choose the simplest example of a checkerboard pattern of shaded regions:
\begin{calign}
\label{fig:KIM}
\begin{aligned}
\includegraphics[width=7cm]{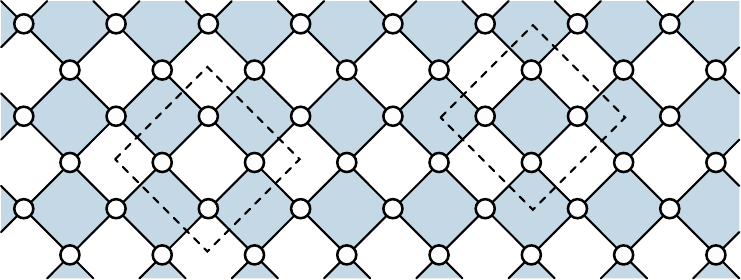}
\end{aligned}
&
\begin{aligned}
\includegraphics[width=7cm]{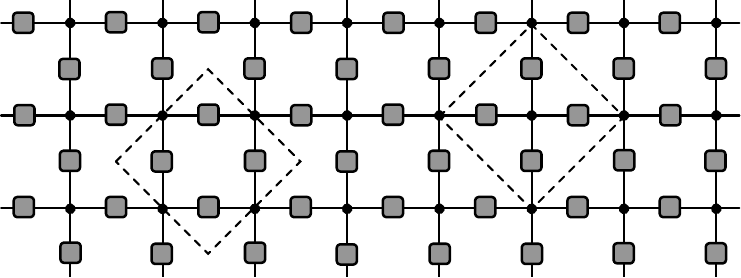}
\end{aligned}
\\ \nonumber
\textrm{\em (a) Shaded calculus representation}
&
\textrm{\em (b) Unitary circuit representation}
\end{calign}
The resulting unitary circuit is composed purely out of complex Hadamard matrices. Due to the periodicity of the circuit a unit cell can be defined, and in both sides of Eq.~\eqref{fig:KIM} two different choices of unit cell have been marked, each consisting of four complex Hadamard matrices.

Depending on the choice of unit cell these circuits can be reinterpreted as either brickwork or clockwork circuits, constructed out of composite dual-unitary gates or composite quantum crosses respectively. Crucially, all these composite objects are guaranteed to be biunitary since diagonal composites of biunitaries are again biunitary~\cite{reutter_biunitary_2019}. The shading pattern of the composite object fixes the kind of biunitary, here dual-unitary gates and quantum crosses respectively. In the first case, the four complex Hadamard matrices in the unit cell can be grouped together as a dual-unitary gate, as follows:\footnote{This construction of dual-unitary gates out of complex Hadamard matrices also appears in Ref.~\cite{claeys_emergent_2022}, generalizing previous constructions from Refs.~\cite{gutkin_exact_2020,aravinda_dual-unitary_2021,Borsi2022}.}
\begin{align}\label{eq:KIM_DU}
\vcenter{\hbox{\includegraphics[width=0.45\linewidth]{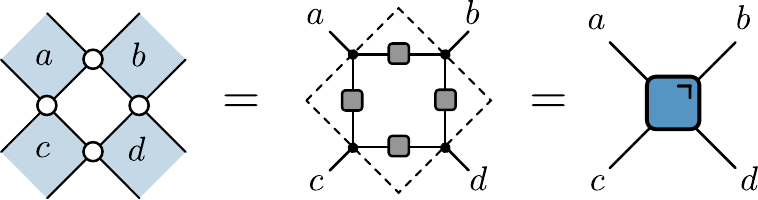}}}\,
\end{align}
In the second case, the four complex Hadamard matrices compose into a quantum cross:
\begin{align}\label{eq:KIM_cross}
\vcenter{\hbox{\includegraphics[width=0.5\linewidth]{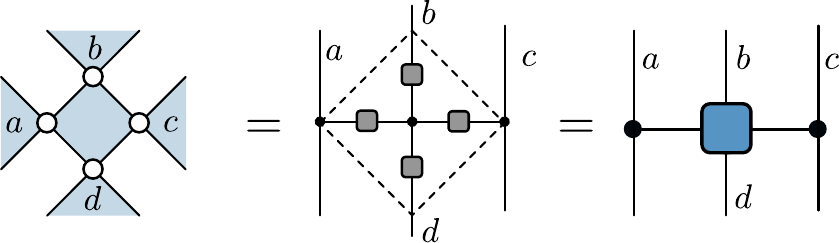}}}\,
\end{align}
Here we have extended the notation of black circles to correspond to tensors which are only nonzero if all connecting wires agree:
\begin{calign}
\vcenter{\hbox{\includegraphics[width=0.18\linewidth]{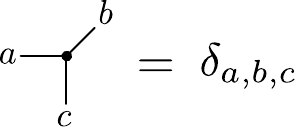}}}
&
\vcenter{\hbox{\includegraphics[width=0.2\linewidth]{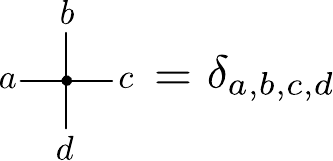}}}
\end{calign}
Note that this notation is consistent with the use of black circles to indicate connections between control wires and controlled unitary gates.

Circuits of the above form are prevalent in the literature on dual-unitary gates, where they appear as decompositions of the self-dual kicked Ising model \cite{akila_particle-time_2016,bertini_entanglement_2019,gopalakrishnan_unitary_2019,ho_exact_2022,Stephen2022}. This decomposition is typically argued from the explicit parametrization of the dual-unitary gate representing the kicked Ising model, whereas here it is directly apparent from the underlying biunitary circuit construction. Explicit parametrizations are given in Appendix~\ref{app:param_KIM} for completeness.

\section{Light-cone correlations}
\label{sec:corr_entanglement}
One of the main characteristics of dual-unitary circuits is that all ultralocal correlation coefficients vanish everywhere except on the edge of a causal light cone, where they can be efficiently calculated. This property has led to dual-unitary circuits being termed `exactly solvable'. The proof for dual-unitary gates is purely graphical and depends on the horizontal and vertical unitarity of the building blocks~\cite{bertini_exact_2019}. The introduction of the shaded calculus allows this proof to be extended to general biunitary circuits.

\begin{defn}[Operator at a given site] For an operator $\rho \in \mathbbm{C}^{q \times q}$, and some choice of site $x \in \mathbb N$ according to the indexing convention of \eqref{fig:indexed}, we define $\rho(x)$ as the operator $\rho$ acting at site $x$, as follows:
\begin{align}
\rho(x) &= \mathbbm{1} \otimes \dots \otimes \mathbbm{1} \otimes \underbrace{\rho}_{x} \otimes \mathbbm{1} \otimes \dots \otimes \mathbbm{1}, \nonumber\\
&= \vcenter{\hbox{\includegraphics[width=0.6\linewidth]{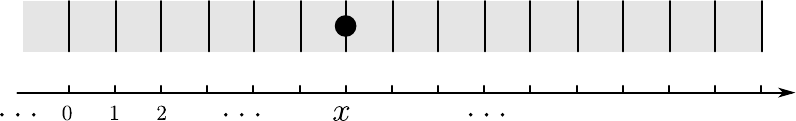}}}
\end{align}
\end{defn}

\noindent
In the second line the black circle represents $\rho$ and we have used the convention from Fig.~\ref{fig:def_biunitarycircuit}.

Any biunitary circuit represents a unitary operator $\mathcal{U}$, with the number of rows in the biunitary circuit equal to the number of discrete time steps. From now on, we will introduce the number of time steps as an additional label $t$ and write $\mathcal{U}_t$.

\begin{defn}
At integer coordinates $x,t$, we define the \textit{correlation function} as follows:
\begin{align}
c_{\rho \sigma}(x,t) \equiv \braket{\,\mathcal{U}_t^{\dagger}\rho(0)\, \mathcal{U}_t\, \sigma(x)} \equiv \mathrm{Tr}\big[\mathcal{U}_t^{\dagger}\rho(0)\, \mathcal{U}_t\, \sigma(x) \big] /\, \mathrm{Tr}\left[\mathbbm{1}\right]
\end{align}
The trace indicates that these correlation functions are taken w.r.t. the maximally mixed state, also known as the infinite-temperature state, $\braket{O} = \mathrm{Tr}(O)/\mathrm{Tr}(\mathbbm{1})$.
\end{defn}
Note that we here assume that the wires on sites $0$ and $x$ do not border a shaded region and represent a free index, but the case where $0$ and/or $x$ correspond to shaded region is analogous.
Without loss of generality we take both $\rho$ and $\sigma$ to be traceless. Otherwise we can redefine $\rho \to \rho-\mathrm{Tr}(\rho) /\mathrm{Tr}(\mathbbm{1}) \times \mathbbm{1}$ and similar for $\sigma$, and use the linearity of the correlation function in $\rho$ and that the correlation functions for the identity $\mathbbm{1}(x) = \mathbbm{1}$ are trivially constant since $\mathcal{U}_t^{\dagger}\mathbbm{1}(x)\, \mathcal{U}_t =\mathcal{U}_t^{\dagger}\, \mathcal{U}_t = \mathbbm{1}$.
\begin{theorem}
The infinite-temperature correlation functions vanish everywhere except on the edge of the causal light cone: $c_{\rho \sigma}(x,t) = 0$ unless $|x| = t,t-1.$
\end{theorem}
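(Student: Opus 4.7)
The plan is to work entirely within the shaded calculus, adapting the graphical proof of Bertini, Kos and Prosen for ordinary dual-unitary circuits. First, I would represent $c_{\rho \sigma}(x,t)$ as a folded shaded diagram: one layer consisting of $\mathcal{U}_t$ with $\rho$ inserted at site $0$ on its top boundary, a second layer $\mathcal{U}_t^{\dagger}$ with $\sigma$ inserted at site $x$ on its bottom boundary, and all external wires closed off by the trace normalised by $\Tr\mathbbm{1}$. The shading is inherited from the underlying biunitary circuit. The identity $\mathcal{U}_t^{\dagger}\mathcal{U}_t = \mathbbm{1}$ lets me immediately collapse every column of this diagram that lies outside both the forward causal cone of $\rho(0)$ and the backward causal cone of $\sigma(x)$, using nothing but the vertical unitarity of the individual biunitary vertices. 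The remaining task is to show that the diagram trivialises whenever $|x|\notin\{t-1,t\}$.

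For $|x|>t$ the two causal cones are disjoint. Here I would proceed column by column from the outside inward, repeatedly cancelling $\mathcal{U}\mathcal{U}^{\dagger}$ pairs via vertical unitarity until the diagram factorises into two disconnected pieces, each containing exactly one of the operators with the rest of its support closed by loops. The resulting factors are proportional to $\Tr\rho$ and $\Tr\sigma$, which vanish by assumption.

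For $|x|<t-1$ both operators lie strictly inside the causal cones. This is where horizontal, genuinely biunitary, unitarity is required: I would sweep biunitaries adjacent to one of the light-cone boundaries inward using the horizontal unitarity equations, so that each one cancels against its adjoint across the fold. Repeating this in the wedge separating the two causal cones eventually disconnects either $\rho$ or $\sigma$ from the rest of the diagram, leaving an isolated $\Tr\rho$ or $\Tr\sigma$ loop and hence a vanishing correlation.

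The main obstacle is bookkeeping of shadings rather than the structure of the argument. The crucial observation is that biunitarity by definition subsumes both vertical and horizontal unitarity for every admissible shading pattern (dual-unitary gate, UEB, Hadamard, QLS, or quantum cross), so each graphical cancellation is valid as soon as the local tiling permits it. Moreover, contracting a biunitary against its adjoint merges the two adjacent regions of the same colour into a single region, so the shading remains consistent after each step and no new boundary configurations are produced. Once this bookkeeping is in place, the brickwork argument extends to biunitary circuits essentially verbatim.
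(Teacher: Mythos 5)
Your proposal is correct and follows essentially the same route as the paper: vertical unitarity collapses the diagram to the causal hourglass, which handles $|x|>t$ by factorising $\Tr\rho$ and $\Tr\sigma$, and horizontal (biunitary) unitarity telescopes the remaining wedge for $|x|<t-1$ until $\Tr\rho$ factors out. Your explicit remark that cancelling a biunitary against its adjoint merges regions of like shading, so no illegal tiling configurations are ever created, is exactly the bookkeeping that makes the shaded-calculus version of the brickwork argument go through, and matches the paper's implicit use of the transparent-grey background.
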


\begin{proof}
The proof proceeds graphically using the shaded calculus. 
In order to represent the biunitary circuits in full generality, we again represent $\mathcal{U}_t$ as a biunitary circuit with a transparent gray background, whereas we represent $\mathcal{U}_t^{\dagger}$ as the corresponding circuit with a transparent red background, again with the convention that every region in the above circuit can either be shaded or not and assuming that every vertex is biunitary. Graphically,
\begin{align}
\mathcal{U}_t^{\dagger}\rho(0)\, \mathcal{U}_t \,\,\,=\,\,\, \vcenter{\hbox{\includegraphics[width=0.4\linewidth]{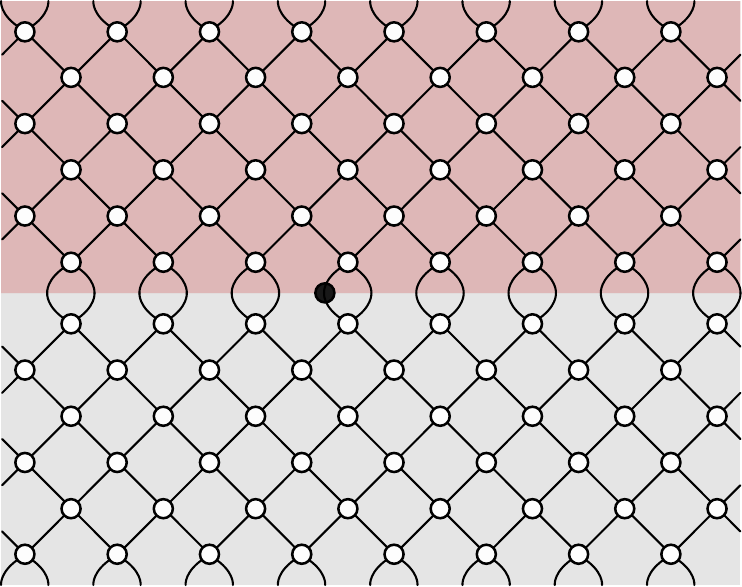}}}\,.
\end{align}
Here we have represented $\rho$ as a black circle. In these circuits we will only ever contract each biunitary connection $U$ with the corresponding $U^{\dagger}$, so we have made the labels implicit.
Through the repeated use of vertical unitarity, the above circuit can be simplified by first eliminating pairs of biunitary connections $U$ and $U^{\dagger}$ in the first row, and subsequently repeating this simplification, to yield the following:
\begin{align}
\vcenter{\hbox{\includegraphics[width=0.4\linewidth]{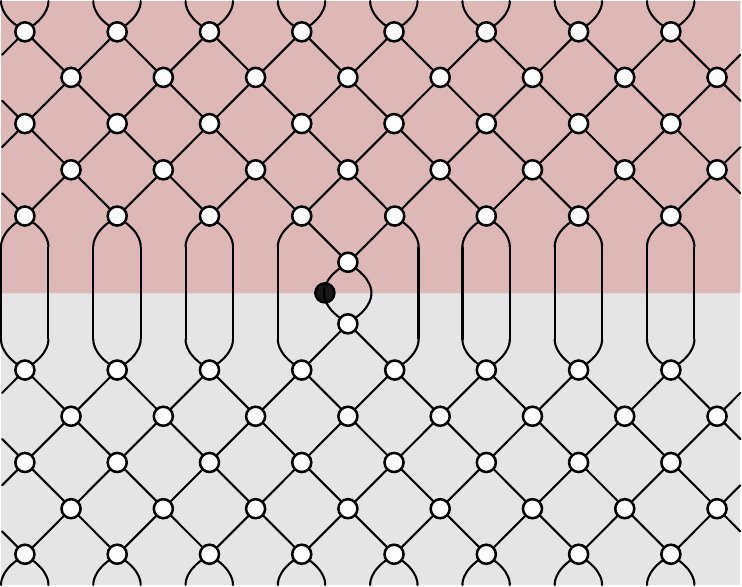}}} \,\,\,=\,\,\, \vcenter{\hbox{\includegraphics[width=0.4\linewidth]{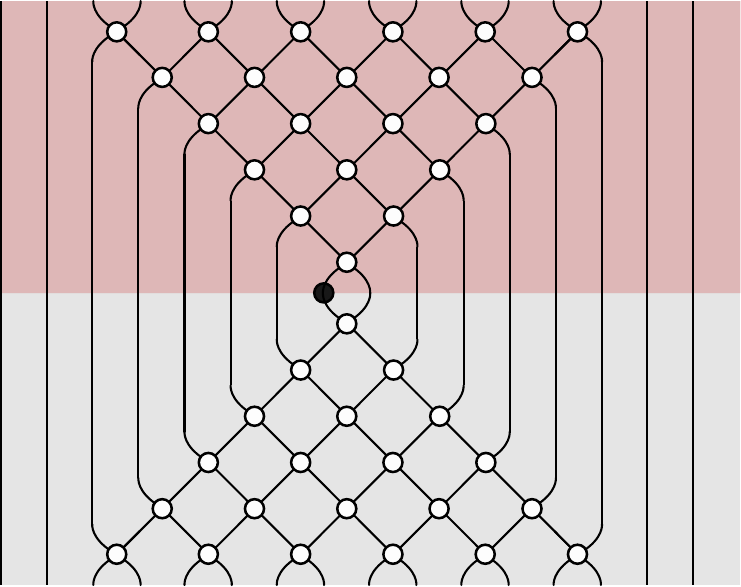}}}\,
\end{align}
This operator acts nontrivially only within the `hourglass' shape and acts as the identity everywhere else. The support grows linearly in the number of time steps, representing the causal structure of unitary brickwork circuits. Note that this property only depends on the vertical unitarity. For $|x|>t$ the argument immediately follows from the above representation of $\mathcal{U}_t^{\dagger}\rho(0)\mathcal{U}_t$, which acts as the identity on the support of $\sigma(x)$, and hence:
\begin{align}
    c_{\rho \sigma}(|x|>t,t) \equiv  \Tr\big[\mathcal{U}_t^{\dagger}\rho(0)\,\mathcal{U}_t \sigma(x)\big] = \Tr\big[\mathcal{U}_t^{\dagger}\rho(0)\,\mathcal{U}_t\big] \Tr\left[\sigma(x)\right] = \Tr(\rho) \Tr(\sigma) = 0\,.
\end{align}

For $|x|<t-1$ the correlations vanish due to the additional horizontal unitarity. 
The trace is represented by dashed lines at the top of the bottom, where we assume periodic boundary conditions connecting the wires and regions at the top and bottom of the diagram. Representing $\sigma$ as a black circle with support inside the causal light cone of $\rho$, a single application of horizontal unitarity can be used to remove the rightmost biunitaries and simplify the corresponding diagrams as follows:
\begin{align}
\vcenter{\hbox{\includegraphics[width=0.4\linewidth]{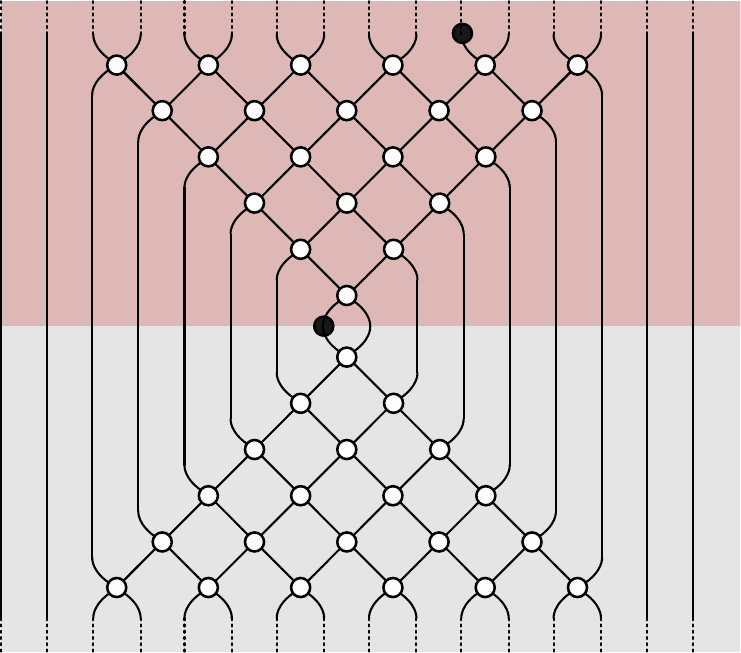}}} \,\,\,=\,\,\, \vcenter{\hbox{\includegraphics[width=0.4\linewidth]{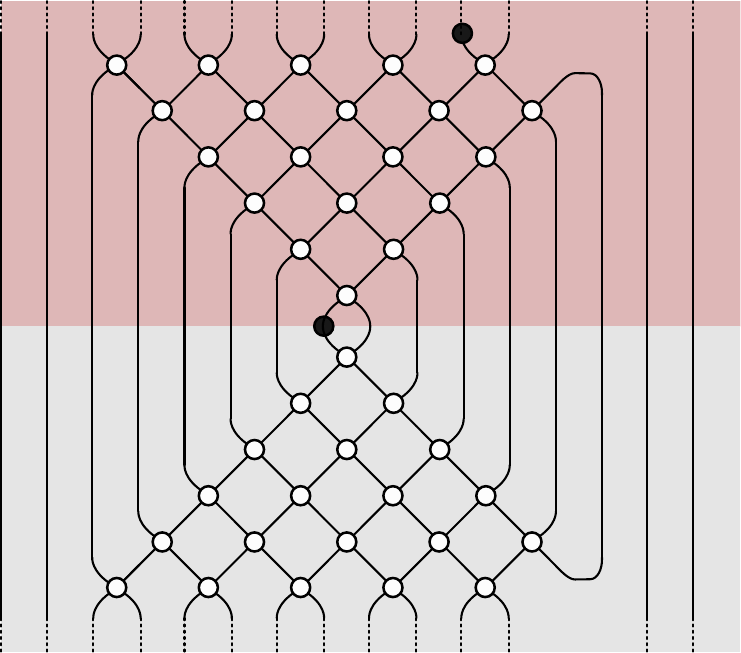}}}\,
\end{align}
These diagrams can be further `telescoped' using horizontal unitarity until the initial operator $\rho$ is encountered:
\begin{align}
\vcenter{\hbox{\includegraphics[width=0.4\linewidth]{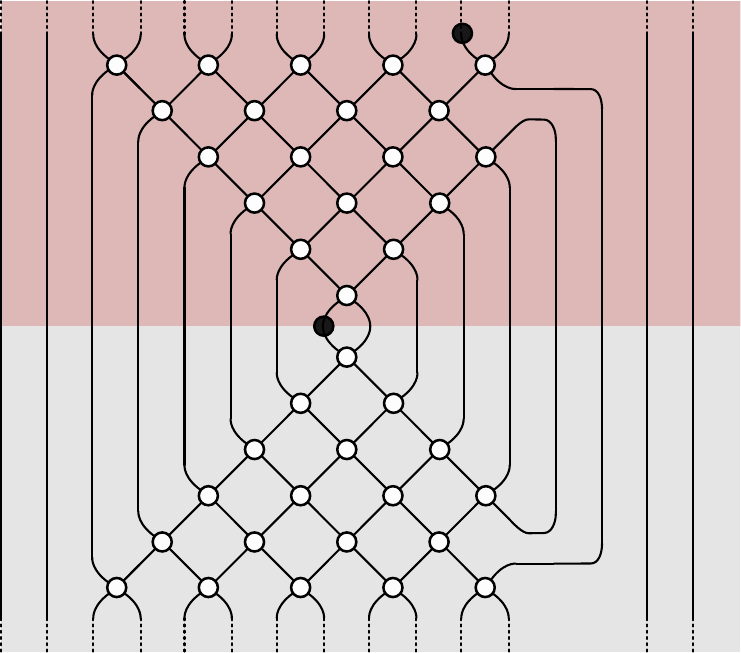}}} \,\,\,=\,\,\, \vcenter{\hbox{\includegraphics[width=0.4\linewidth]{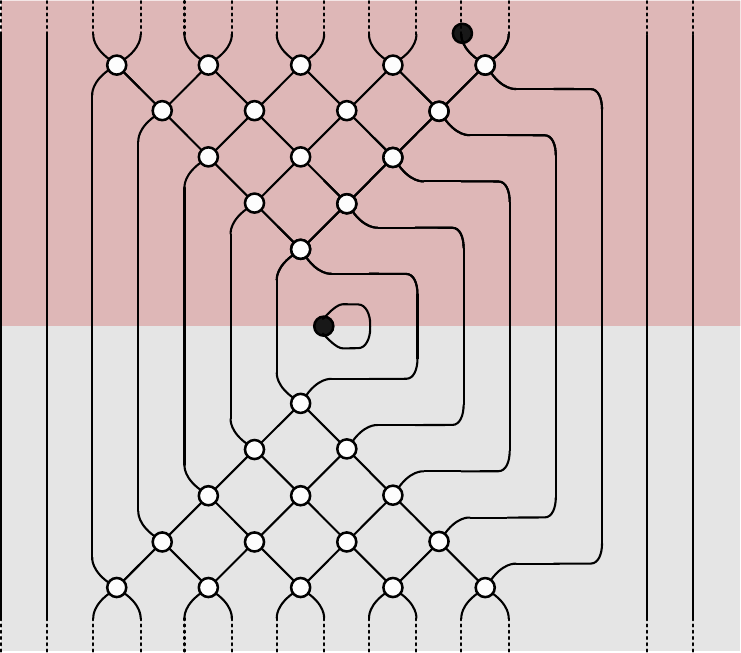}}}\,
\end{align}
As such, $\Tr(\rho)$ is again going to factorize out and the correlation function vanishes, $c_{\rho \sigma}(|x|<t-1) = 0$.

Exactly on the edge of the causal light cone the correlations can be calculated using the approach outlined in Refs.~\cite{bertini_exact_2019,claeys_maximum_2020}, using vertical unitarity to simplify the circuits as follows:
\begin{align}
\vcenter{\hbox{\includegraphics[width=0.4\linewidth]{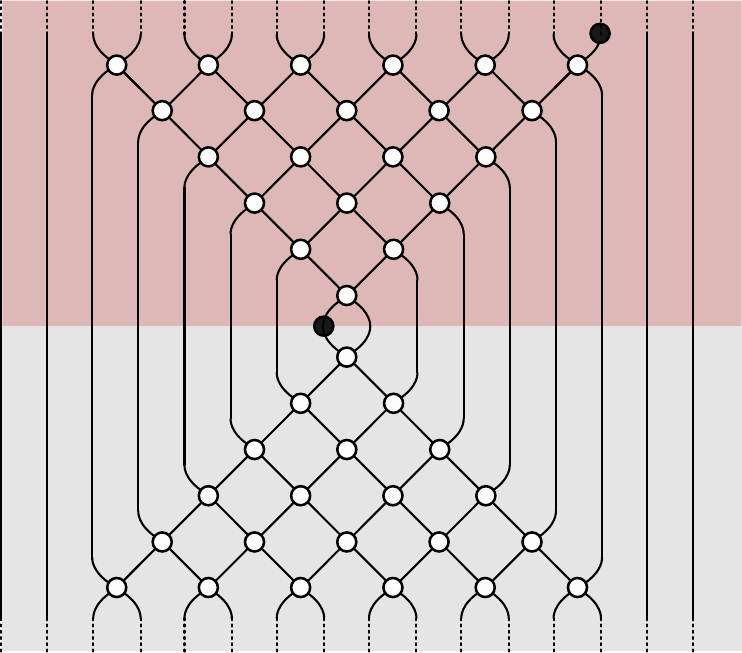}}} \,\,\,=\,\,\, \vcenter{\hbox{\includegraphics[width=0.4\linewidth]{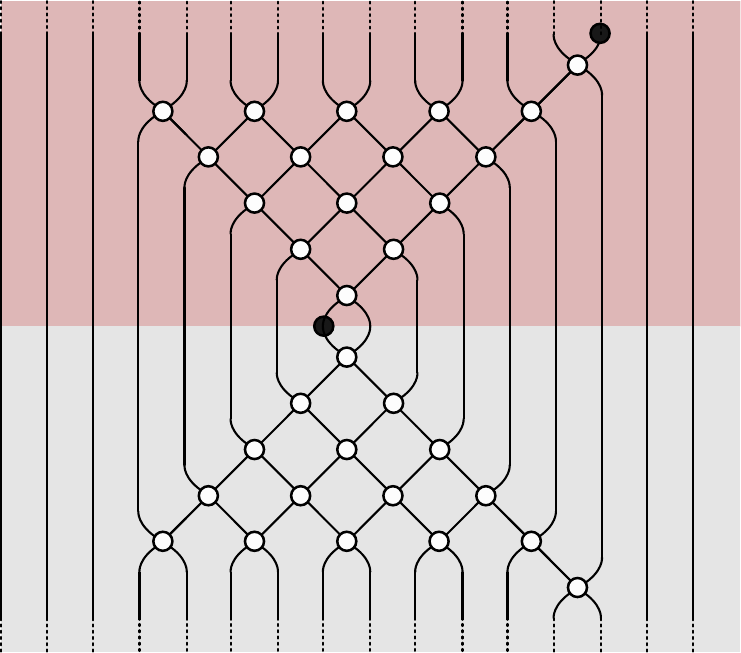}}}\,
\end{align}
Finally we obtain the following: 
\begin{align}
    c_{\rho\sigma}(x=t,t) \,\,\,=\,\,\, \,  \vcenter{\hbox{\includegraphics[width=0.4\linewidth]{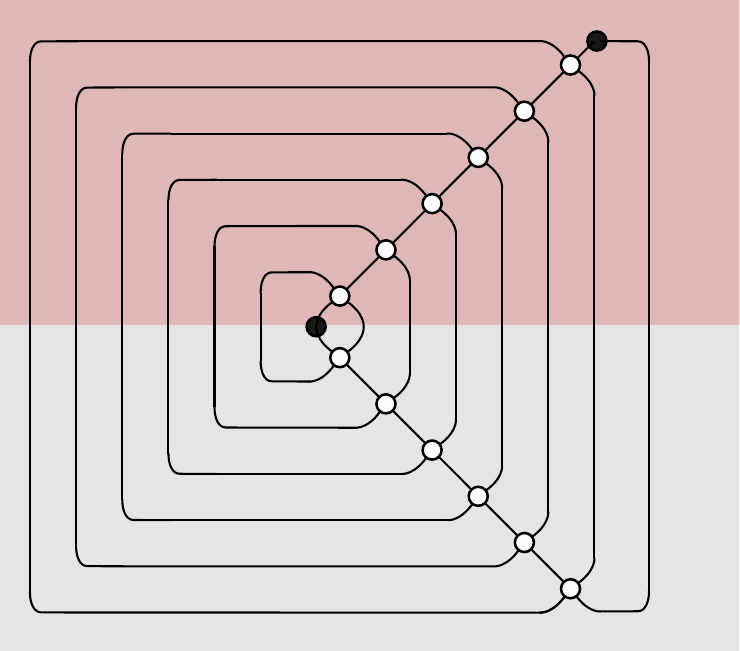}}}\,
\end{align}
The result for $|x|=t-1$ is analogous. Expressions of the above form can be efficiently calculated since the total number of biunitaries only grows linearly with the number of time steps (see again Refs.~\cite{bertini_exact_2019,claeys_maximum_2020}).
\end{proof}

We emphasize that all graphical manipulations are identical to the known derivations for dual-unitary brickwork circuits. The innovation here is to observe that the traditional proof extends to this more general case, thanks to the expressiveness of the shaded calculus. As a result, all such circuits will give rise to the light-cone dynamics of correlations functions that is characteristic of dual-unitary circuits.

\section{Entanglement dynamics}
\label{sec:entanglement}
\subsection{Solvable states}
Important properties of dual-unitary gates include maximal entanglement growth and exact thermalization to an infinite-temperature state after a finite number of time steps~\cite{gopalakrishnan_unitary_2019,bertini_operator_2020,piroli_exact_2020,reid_entanglement_2021,zhou_maximal_2022,foligno_growth_2022,claeys_exact_2022}. However, exact calculations of the entanglement growth and thermalization are restricted to special `solvable' initial states \cite{piroli_exact_2020}. In this Section, we first extend the notion of solvable states to general biunitary circuits, and subsequently use these to present exact results for entanglement dynamics and to prove that biunitary circuits exhibit exact thermalization after a finite number of time steps.

We first define the solvability of local tensors acting as building blocks for solvable states defined on the full lattice. 

\begin{defn}
A \textit{solvable tensor} is defined as a vertex $\mathcal{N}$, where all regions can again be shaded or not, 
\begin{align}
\mathcal{N} = \vcenter{\hbox{\includegraphics[width=0.15\linewidth]{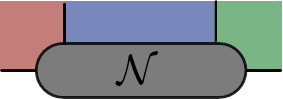}}}\,,
\end{align}
with a conjugate vertex $\mathcal{N}^{\dagger}$, such that the following conditions hold:
\begin{enumerate}
\item[(i)] These satisfy a notion of horizontal unitarity:
\begin{align}\label{eq:unitarity_N}
\vcenter{\hbox{\includegraphics[width=0.25\linewidth]{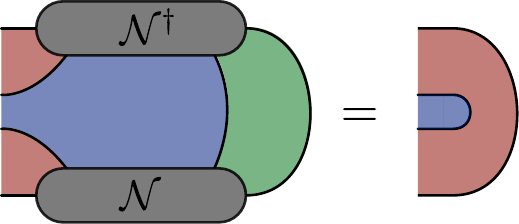}}} \qquad\qquad \vcenter{\hbox{\includegraphics[width=0.25\linewidth]{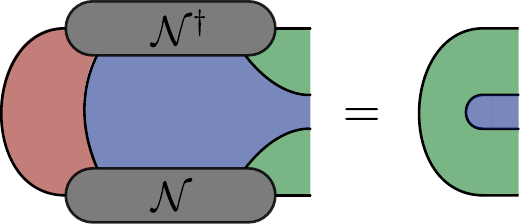}}}
\end{align}
\item[(ii)] The transfer matrix $E(\mathcal{N})$ constructed out of $\mathcal{N}$ and $\mathcal{N}^\dagger$ as
\begin{align}\label{eq:E_N}
E(\mathcal{N}) = \vcenter{\hbox{\includegraphics[width=0.15\linewidth]{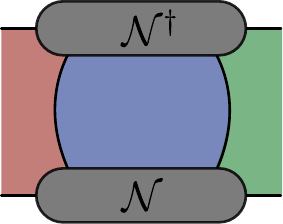}}}
\end{align}
can be rescaled to have a unique eigenvalue $\lambda$ with largest absolute value $\lambda=1$ and with algebraic multiplicity 1, with corresponding left and right eigenstates following from Eq.~\eqref{eq:unitarity_N} as follows:
\begin{align}\label{eq:eigenstates_N}
\vcenter{\hbox{\includegraphics[width=0.25\linewidth]{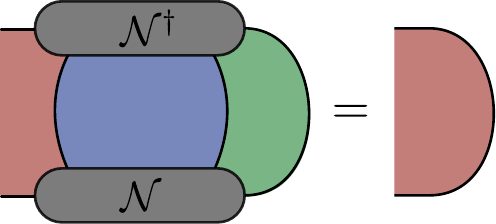}}} \qquad\qquad \vcenter{\hbox{\includegraphics[width=0.25\linewidth]{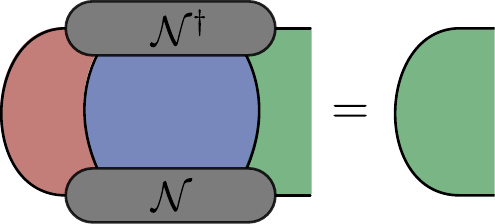}}}
\end{align}
\end{enumerate}
\end{defn}

This definition is a direct generalization of the definition of solvable matrix product states from Ref.~\cite{piroli_exact_2020}, now taking into account different possible shadings of the vertex. As will be illustrated below, the first condition generically implies the second one. 

A state for the full lattice can be constructed out of a set of solvable tensors as follows:
\begin{align*}
\ket{\Psi(\{\mathcal{N}\})} \,\,\,=\,\,\, \vcenter{\hbox{\includegraphics[width=0.55\linewidth]{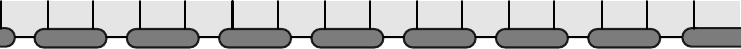}}}
\end{align*}
Here we have made the labels $\mathcal{N}$ implicit and again introduced a transparent gray background to indicate that every region in the above state can either be shaded or not, with the implicit assumption that every vertex corresponds to a solvable tensor. The resulting biunitary circuit dynamics of such a state can be represented in the following manner, with the restriction that the shading pattern of the initial state matches the shading pattern of the biunitary circuit:
\begin{align}
\ket{\Psi(t,\{\mathcal{N}\})} = \mathcal{U}_t\ket{\Psi(\{\mathcal{N}\})} \,\,\,=\,\,\, \vcenter{\hbox{\includegraphics[width=0.55\linewidth]{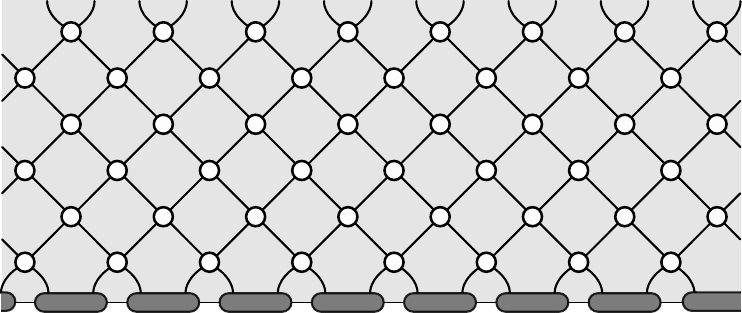}}}
\end{align}

\subsection{Entanglement entropy}
Quantum thermalization and entanglement is encoded in the reduced density matrix $\rho_A(t)$ for a subsystem $A$, which is defined by tracing out all degrees of freedom from the complement of $A$ in the density matrix. Graphically, the reduced density matrix can be represented as follows:
\begin{align}\label{eq:rdm}
    \rho_A(t)\,\,\,=\,\,\, \Tr_{\overline{A}} \left( \,\ket{\Psi(t,\{\mathcal{N}\})}\bra{\Psi(t,\{\mathcal{N}\})} \,\right) \,\,\,=\,\,\, \vcenter{\hbox{\includegraphics[width=0.4\linewidth]{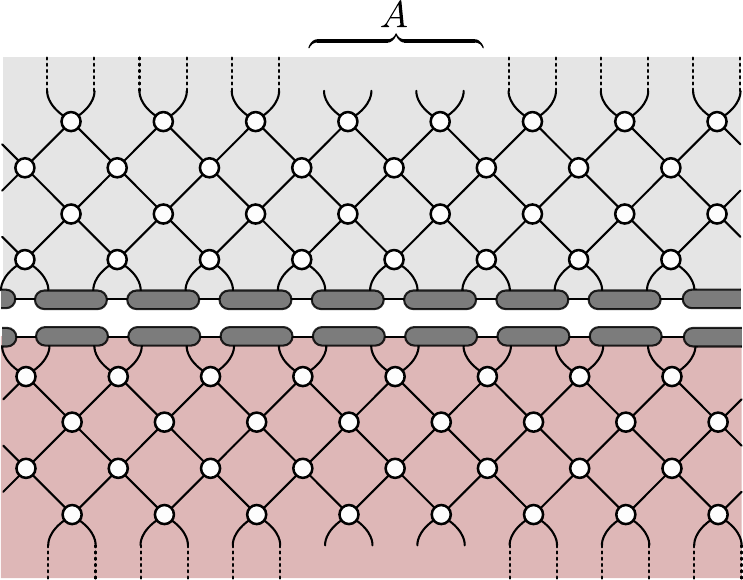}}}\,
\end{align}
Here we have chosen the subsystem $A$ to consist of 4 neighbouring wires, and have restricted the circuit to $t=4$ time steps. The dashed lines in the complement $A$ again indicate connections between the top region and the bottom region corresponding to the partial trace. Note that the Hilbert space of the subsystem $A$ will depend on the choice of shading.
\begin{theorem}
For a subsystem $A$ consisting of $\ell$ neighbouring wires, the reduced density matrix on $A$ for an initial solvable state evolved using a biunitary circuit for $t$ discrete time steps equals the identity operator on $A$ for $t \geq \ell/2$, i.e.
\begin{align*}
\rho_A(t \geq \ell/2) = \mathbbm{1}_A/q_A\,,
\end{align*}
where $q_A$ is the dimension of the Hilbert space of the subsystem $A$ in the circuit representation.

\end{theorem}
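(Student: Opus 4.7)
The plan is to prove this entirely within the shaded calculus, closely mirroring the standard argument of Piroli \emph{et al.} for solvable matrix product states in dual-unitary brickwork circuits, but with the added care required by the presence of shaded and unshaded regions. The overall strategy has three stages: first reduce the diagram using vertical unitarity of the biunitaries to expose the causal wedge supporting $\rho_A(t)$; next, telescope inwards from the left and right using horizontal unitarity of the biunitaries; and finally, collapse the resulting column structure using the horizontal unitarity and transfer-matrix conditions of the solvable tensors $\mathcal{N}$.

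First I would start from the graphical expression \eqref{eq:rdm} for $\rho_A(t)$. Outside the causal light cone emanating from $A$, each biunitary on the ket side meets its conjugate on the bra side through the partial trace, and repeated application of vertical unitarity of the biunitaries cancels them in pairs, exactly as in the light-cone proof of Section~\ref{sec:corr_entanglement}. The result is a diagram whose nontrivial content sits inside the backwards light cone of $A$: a truncated wedge consisting of $t$ layers of biunitaries (and their daggers) whose base, at the initial-state row, has width $\ell + 2t$ in the brickwork indexing. The assumption $t \geq \ell/2$ is exactly the condition that the left and right edges of this wedge descend to non-overlapping regions of the initial state at the bottom row.

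Next I would peel off the biunitaries at the left and right edges of the wedge, one diagonal at a time. Each edge biunitary has its ket copy contracted with its bra copy through a single shared wire (or shaded region) along the vertical boundary of the light cone; horizontal unitarity of the biunitary allows it to be replaced by an identity on the adjacent solvable tensors, together with a scalar factor. Telescoping this cancellation inwards from both sides, the wedge collapses row by row until what remains on each side is a vertical stack of solvable-tensor transfer matrices $E(\mathcal{N})$ from \eqref{eq:E_N}, acting along the spatial direction. Here the main care is bookkeeping: each move must be compatible with the shading pattern, so one has to check that the four possible shading patterns around each edge biunitary do indeed correspond to one of the five biunitary connections allowed in Section~\ref{sec:quantumstructures} and that the horizontal-unitarity equation applies in the correct oriented form. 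This shading bookkeeping is the main obstacle; it is not hard in any one configuration, but it must be done uniformly.

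Finally, I would invoke condition (ii) of the solvable-tensor definition. The stacks of $E(\mathcal{N})$ on either side are of height that grows with $t - \ell/2 \geq 0$; since $E(\mathcal{N})$ has a unique leading eigenvalue $\lambda = 1$ with the explicit left and right eigenstates displayed in \eqref{eq:eigenstates_N}, these stacks project (in the appropriate limit and up to the overall normalization we are tracking) onto the rank-one operator built from those eigenstates. Plugging these eigenstates back into the wedge and using \eqref{eq:unitarity_N} once more to cancel the remaining $\mathcal{N}\mathcal{N}^\dagger$ pairs along the bottom, together with vertical unitarity of the biunitaries along the $A$-columns, produces the straight-through identity diagram on $A$, i.e.\ $\mathbbm{1}_A$. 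Tracking the overall scalar factor from the various biunitary and transfer-matrix normalizations gives the denominator $q_A$, yielding $\rho_A(t) = \mathbbm{1}_A / q_A$ and completing the argument.
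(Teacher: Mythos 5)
Your overall strategy matches the paper's (and the Piroli \emph{et al.}\ brickwork derivation it generalizes): reduce with vertical unitarity, invoke the transfer-matrix fixed point, propagate a ``cap'' through the initial-state row via the solvability conditions, cancel the light-cone wedge with horizontal unitarity, and finish with vertical unitarity. However, the \emph{order} of your second and third stages is inverted, and this is not a cosmetic difference --- it is a genuine gap.

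Concretely, you propose to ``peel off the biunitaries at the left and right edges of the wedge'' by horizontal unitarity \emph{before} appealing to the transfer matrix $E(\mathcal{N})$. But after the vertical-unitarity reduction, an edge biunitary is connected to its conjugate through only the trace at the top of the diagram, i.e.\ through one wire or one shaded region on its outward side; the other wire/region on that same side still runs into the interior of the wedge or down to a solvable tensor. The horizontal unitarity equation of a biunitary requires \emph{both} objects bordering one side to be contracted with the conjugate copy, so with only a single shared connection it simply does not apply. In the paper's derivation the missing contraction is supplied first: the infinite stack of $E(\mathcal{N})$ acting on the two half-lines outside the light cone is replaced (exactly, because the environment is infinite) by the projector onto the unique leading eigenvector, producing the ``cap'' of Eq.~\eqref{eq:eigenstates_N}; the solvability condition~\eqref{eq:unitarity_N} then pushes that cap along the bottom row of $\mathcal{N}$ tensors, and only \emph{then} does horizontal unitarity of the biunitaries become applicable, layer by layer. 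Your variant also claims the $E(\mathcal{N})$ stacks have height of order $t-\ell/2$ and project onto the fixed point ``in the appropriate limit,'' which would give the result only asymptotically; in the paper the $E(\mathcal{N})$ stacks are the infinite environment, so the projector replacement is exact at every finite $t$ and yields the exact identity $\rho_A=\mathbbm{1}_A/q_A$ whenever $t\ge \ell/2$. Re-ordering the steps as in the paper closes the gap and recovers the exact statement.
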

This result is known as \emph{thermalization}: after sufficiently long time evolution, the reduced density matrix for a subsystem $A$ attains a stationary value in which all information about the initial state is lost. Biunitary circuits hence exhibit exact thermalization of a subsystem $A$ after $t=\ell/2$ time steps. The entanglement entropy, defined as $S_A(t) = -\Tr[\rho_A(t) \ln(\rho_A(t))]$, reaches its maximal value of
\begin{align*}
S_A(t  \geq \ell/2) = \ln(q_A).
\end{align*}
Since the reduced density matrix equals the identity, this model has a flat entanglement spectrum~\cite{gopalakrishnan_unitary_2019}, indicating that all R\'enyi entropies $S_A^{(n)}(t)$, defined as
\begin{align}\label{eq:Renyi}
S_A^{(n)}(t) = \frac{1}{1-n} \ln \Tr\left[\rho_A(t)^n\right],
\end{align}
reach the same maximal value of $S_A^{(n)}(t  \geq \ell/2) = \ln(q_A), \forall n \in \mathbbm{N}$.

\begin{proof}
The proof can again be made purely graphical and follows the same steps as the derivation of Piroli \emph{et al.}~\cite{piroli_exact_2020}, with the shaded regions here giving greater generality. We here illustrate the proof for the setup of Eq.~\eqref{eq:rdm}, and extensions to different time steps and subsystems $A$ follow directly. Using vertical unitarity of the vertices, Eq.~\eqref{eq:rdm} can first be simplified to yield:
\begin{align}
\rho_A(t) \,\,\,=\,\,\, \vcenter{\hbox{\includegraphics[width=0.4\linewidth]{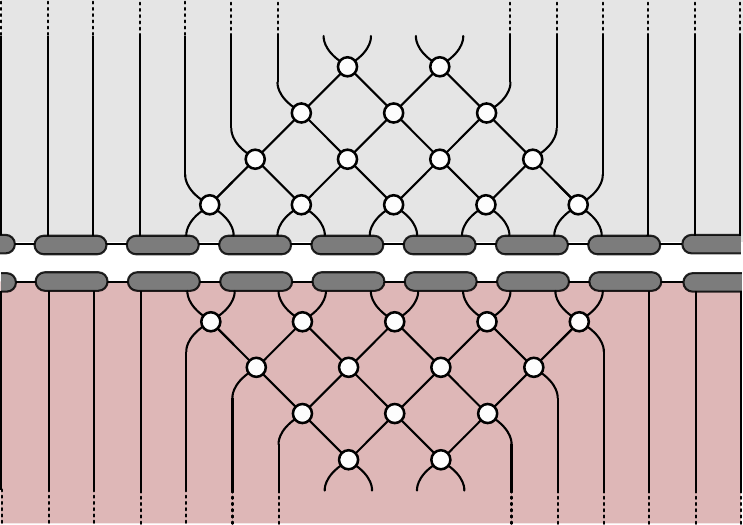}}}\,.
\end{align}
To the left and right, the effect of the environment now consists of the repeated application of the transfer matrix $E(\mathcal{N})$ from Eq.~\eqref{eq:E_N}. Since the environment is infinite, we can replace the transfer matrix by a projector on the leading eigenspace, which here is nondegenerate and with corresponding eigenstates defined in Eq.~\eqref{eq:eigenstates_N}. Introducing these eigenstates, the above expression simplifies as follows, where we have also moved all connecting wires along the top/bottom to the sides for clarity:
\begin{align}
\rho_A(t) \,\,\,=\,\,\, \vcenter{\hbox{\includegraphics[width=0.5\linewidth]{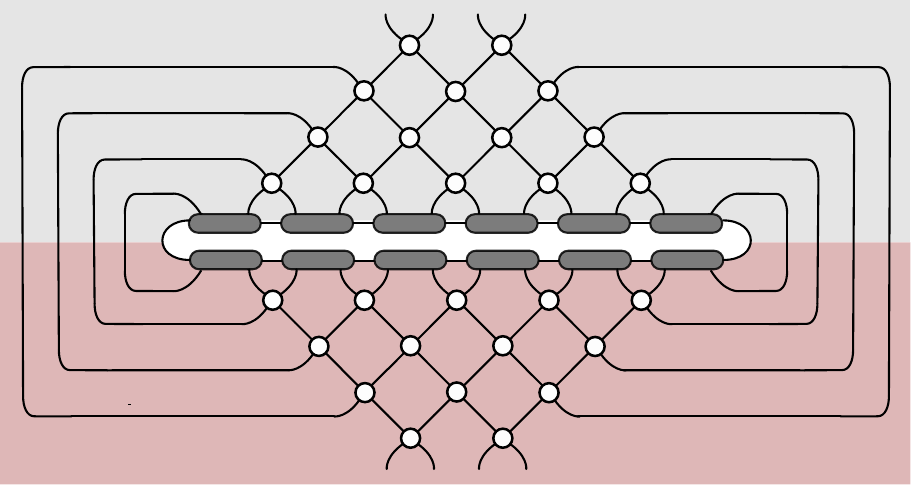}}}\,,
\end{align}
Using the solvability of the initial state, the horizontal contractions can be propagated along the initial state, using first the solvability condition and then the horizontal unitarity of the vertices in the circuit:
\begin{align}
\rho_A(t) = \vcenter{\hbox{\includegraphics[width=0.45\linewidth]{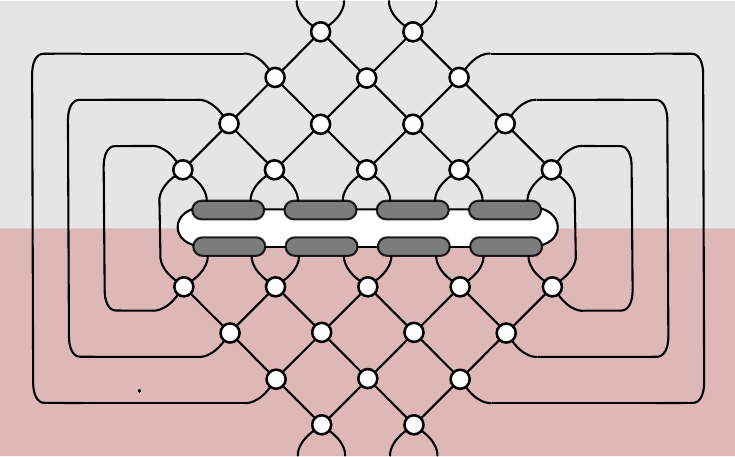}}}= \vcenter{\hbox{\includegraphics[width=0.45\linewidth]{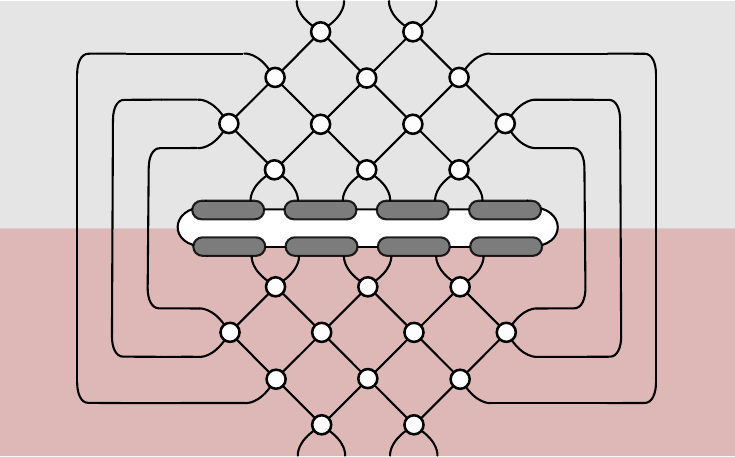}}}\,
\end{align}
This procedure can be repeated to eliminate all vertices originating from the initial state, leading to the following:
\begin{align}
\rho_A(t) \,\,\,=\,\,\, \vcenter{\hbox{\includegraphics[width=0.25\linewidth]{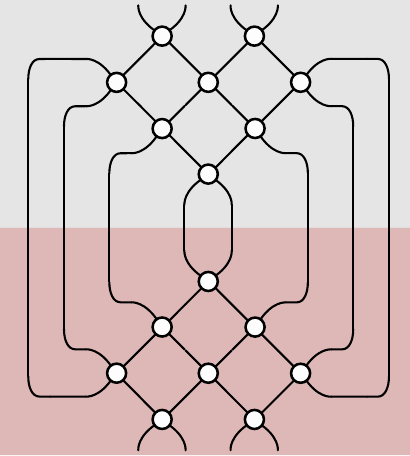}}}\,.
\end{align}
All remaining vertices can be removed using vertical unitarity, resulting in a final expression for the reduced density matrix as follows:
\begin{align}
\rho_A(t) \,\,\,=\,\,\, \vcenter{\hbox{\includegraphics[width=0.13\linewidth]{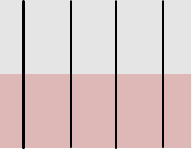}}}\, \,\,\,\propto\,\,\, \mathbbm{1}_A\,.
\end{align}
The correct prefactor can be recovered by noting that $\Tr[\rho_A(t)]=1$.
\end{proof}
The condition $t \geq \ell/2$ arises from the necessity to fully propagate the horizonal contractions along the initial state, which is only possible after sufficiently many time steps. To illustrate this condition, consider a subsystem of size $\ell=8$ and $t=2$ time steps, where a similar derivation can be performed:
\begin{align}\label{eq:rho_shortt}
\rho_A(t) \,\,\,=\,\,\, \vcenter{\hbox{\includegraphics[width=0.4\linewidth]{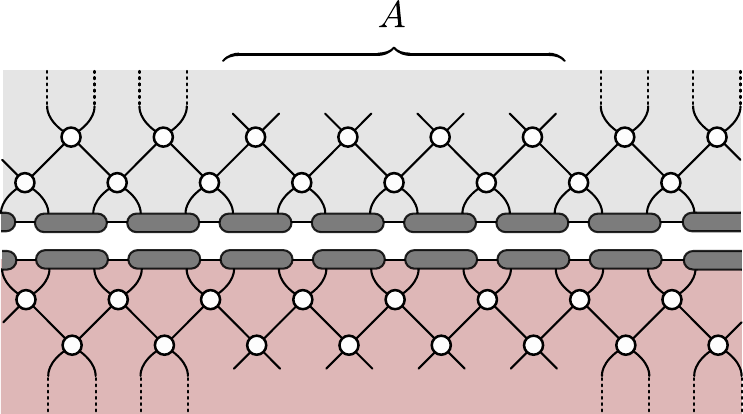}}}\,\,\,=\,\,\, \vcenter{\hbox{\includegraphics[width=0.24\linewidth]{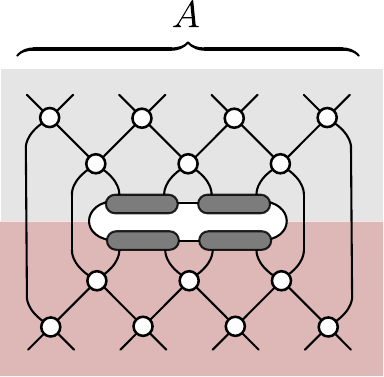}}}\,
\end{align}
The expression on the right hand side can no longer be simplified. For each time step we can perform a single contraction along the horizontal direction removing a two-site solvable tensor, resulting in the condition $t \geq \ell/2$. However, such an expression can still be used to calculate the entanglement entropy: Eq.~\eqref{eq:rho_shortt} corresponds to a unitary transformation of an operator acting nontrivially on $\ell-2t$ wires and acting as the identity on the remaining $2t$ wires. If the eigenspectrum of the nontrivial operator is known, the entanglement spectrum is known and the entanglement entropy can be calculated. Such calculations are typically possible in the scaling limit where $\ell, t \to \infty$ while keeping the ratio $\ell/t$ fixed (see Ref.~\cite{piroli_exact_2020} and below).

\subsection{Examples}
In this subsection we consider the two limiting cases of brickwork and clockwork circuits as well as the intermediate dynamical boundary construction and explicitly construct solvable states as matrix product states (MPSs). The explicit representation as matrix product states guarantees that these states are area-law entangled \cite{orus_practical_2014} (the entanglement $S_A$ does not scale with the size of $A$ but rather with the size of the boundary $A$, i.e. is constant in one dimension), to be contrasted with the volume-law entanglement of the steady-state reduced density matrix (the entanglement scales with the size of $A$).

\paragraph{Matrix product states.} For completeness, we briefly recall the definition of matrix product states (see e.g. Ref.~\cite{orus_practical_2014}). A two-site shift invariant matrix product state for a system of $L$ sites with local Hilbert space dimension $q$ can be written as
\begin{align}
\ket{\Psi(\{A\},\{B\})} = \sum_{\{i_j\}}^q  \Tr\left[A^{(i_1)} B^{(i_2)} A^{(i_3)} B^{(i_4)} \dots  \right]\ket{i_1, i_2 \dots i_L},
\end{align}
parametrized in terms of a sets of matrices $\{A^{(i)}|i=1\dots q\}$, $\{B^{(i)}|i=1\dots q\}$ with dimensions $\chi \times \chi$. The dimension $\chi$ is known as the bond dimension and determines an upper bound on the entanglement between lattice subsystems. Crucially, if $\chi$ is a constant the entanglement entropy of any subsystem does not scale with subsystem size.

The amplitudes in the wave function can be graphically expressed by introducing the tensors
\begin{calign}\label{eq:MPS}
A^{(i)}_{ab} = \, \vcenter{\hbox{\includegraphics[width=0.1\linewidth]{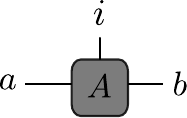}}}
&
B^{(i)}_{ab} =\, \vcenter{\hbox{\includegraphics[width=0.1\linewidth]{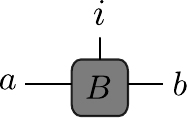}}}\,
\end{calign}
which results in:
\begin{align}
\ket{\Psi(\{A\},\{B\})} = \sum_{\{i_j\}}^q\, \vcenter{\hbox{\includegraphics[width=0.4\linewidth]{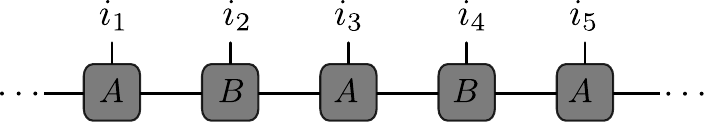}}}\, \ket{i_1, i_2 \dots i_L}
\end{align}

\paragraph{Brickwork circuits.} For purely dual-unitary circuits the definition of solvable tensors by construction reproduces the notion of exactly solvable matrix product states \cite{piroli_exact_2020}. 
Since there are no shaded areas, these tensors have four indices and we can write:
\begin{align}
\mathcal{N}^{(b,c)}_{ad} = \vcenter{\hbox{\includegraphics[width=0.18\linewidth]{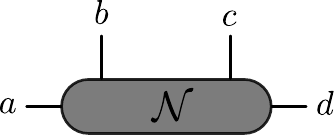}}}.
\end{align}
The indices $b,c$ correspond to the physical Hilbert space with e.g. dimension $q$, whereas the indices $a,d$ correspond to an auxiliary Hilbert space that we take to have bond dimension $\chi$.
The corresponding initial state is immediately expressed as a matrix product state. For convenience we take all $\mathcal{N}$ to be identical, resulting in a translationally invariant state, although this is not a necessary restriction:
\begin{align}
\ket{\Psi(\mathcal{N})} &= \sum_{\{i_j\}}^q \vcenter{\hbox{\includegraphics[width=0.43\linewidth]{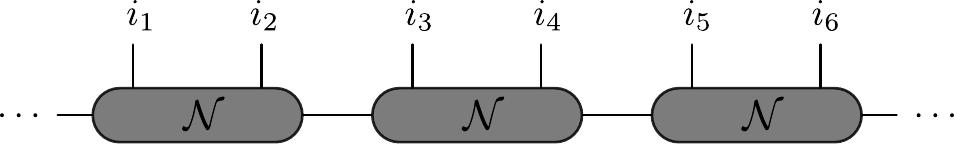}}} \ket{\dots i_1\, i_2\, i_3\, i_4\, i_5\, i_6 \dots} \\
&=\sum_{\{i_j\}}^q \Tr\left[\dots \mathcal{N}^{(i_1,i_2)}\mathcal{N}^{(i_3,i_4)}\mathcal{N}^{(i_5,i_6)} \dots \right]\ket{\dots i_1\, i_2\, i_3\, i_4\, i_5\, i_6 \dots}
\end{align}
This expression corresponds to a matrix product state \eqref{eq:MPS} with $\mathcal{N}^{(i,j)} = A^{(i)}B^{(j)}$.
The first solvability condition can be written as:
\begin{align}
\vcenter{\hbox{\includegraphics[width=0.3\linewidth]{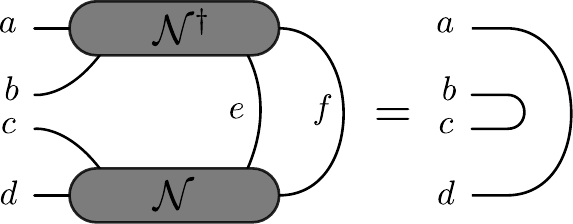}}} \qquad \Rightarrow \qquad \sum_{e=1}^q\sum_{f=1}^{\chi} \left[\mathcal{N}^{(b,e)}_{af}\right]^*\mathcal{N}^{(c,e)}_{df} = \delta_{ad}\delta_{bc}.
\end{align}
Such states have been extensively analyzed in Ref.~\cite{piroli_exact_2020}, and we here repeat the main results. The condition \eqref{eq:unitarity_N} corresponds to unitarity of the matrix $\mathcal{W}$ with matrix elements $\mathcal{W}_{ab,cd} = \mathcal{N}^{(b,c)}_{ad}$. As such, solvable state are uniquely parametrized by unitary matrices acting on $\mathbbm{C}^q \otimes \mathbbm{C}^\chi$. The second condition typically needs to be checked on a case by case basis and corresponds to injectivity of the corresponding MPS. Explicit parametrization for different bond dimensions $\chi=1$ and $\chi=2$ were presented in Ref.~\cite{piroli_exact_2020}, and for $\chi=1$ injectivity is guaranteed whereas for $\chi=2$ the non-injective states form a set of measure zero. As such, the first condition \eqref{eq:unitarity_N} is the more stringent one.

The time-evolved state can be represented in the shaded calculus and in tensor network notation respectively as:
\begin{align}
\vcenter{\hbox{\includegraphics[width=0.7\linewidth]{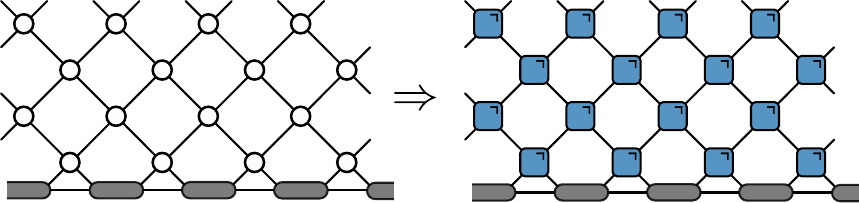}}}
\end{align}
As shown in Ref.~\cite{piroli_exact_2020}, the entanglement entropy for a subsystem of size $\ell$ and local Hilbert space dimension $q$ can be calculated in the scaling limit where $\ell, t \to \infty$ for a finite ratio $\ell/t = \zeta$, using expressions of the form \eqref{eq:rho_shortt},where it was shown to take the universal form
\begin{align}\label{eq:entgrowth_bw}
\lim_{\substack{\ell,t \to \infty \\ \ell/t = \zeta}} S_A(t)/\ell = \textrm{min}(2,\zeta^{-1})\log(q)\,.
\end{align}
The entanglement entropy grows with the maximal possible slope of $2\log(q)$, characteristic of dual-unitary circuits \cite{zhou_maximal_2022}.

\paragraph{Clockwork circuits.} Our construction directly allows for the construction of solvable initial states for clockwork circuits, which we introduce here. Since all regions are now shaded, the tensors $\mathcal{N}$ are parametrized by three indices, and we write:
\begin{align}\label{eq:N_cw}
\mathcal{N}^{(b)}_{ac} = \vcenter{\hbox{\includegraphics[width=0.18\linewidth]{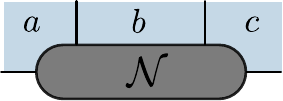}}}\,.
\end{align}
We take all indices to correspond to dimension $q$, since these will correspond to physical indices in the tensor network representation.
The first solvability conditions are as follows:
\begin{align}
\vcenter{\hbox{\includegraphics[width=0.3\linewidth]{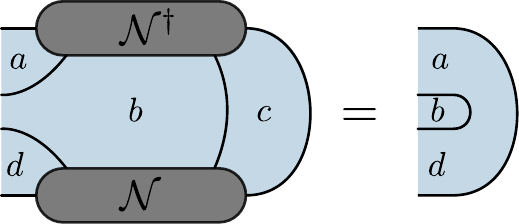}}} \qquad \Rightarrow \qquad \sum_{c=1}^q \left[\mathcal{N}^{(b)}_{ac}\right]^*\mathcal{N}^{(b)}_{dc} = \delta_{ad}, \qquad \forall b=1\dots q
\\[5pt]
\vcenter{\hbox{\includegraphics[width=0.3\linewidth]{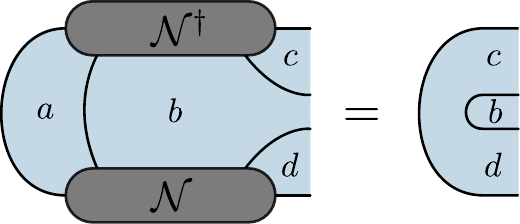}}} \qquad \Rightarrow \qquad \sum_{a=1}^q \left[\mathcal{N}^{(b)}_{ac}\right]^*\mathcal{N}^{(b)}_{ad} = \delta_{cd}, \qquad \forall b=1\dots q
\end{align}
These conditions can be directly seen to be equivalent to the unitarity of the matrices $\mathcal{N}^{(b)}, \forall b$. As such, solvable states for clockwork circuits are parametrized by a set of unitary matrices $\{\mathcal{N}^{(b)} \in U(q) |b=1\dots q\}$.

For any such set of unitary matrices the additional condition on the transfer matrix \eqref{eq:E_N} is generically satisfied. This can be understood from an explicit construction: the matrix elements of the $q\times q$ transfer matrix $E(\mathcal{N})$ are defined from Eq.~\eqref{eq:E_N} as
\begin{align}
    E(\mathcal{N})_{ac} = \frac{1}{q}\sum_{b=1}^q \left[\mathcal{N}^{(b)}_{ac}\right]^*\mathcal{N}^{(b)}_{ac} = \frac{1}{q}\sum_{b=1}^q \left|\mathcal{N}^{(b)}_{ac}\right|^2
\end{align}
where we have rescaled $E(\mathcal{N})$ by a factor $q$. Due to the unitarity of the individual matrices $\mathcal{N}^{(b)}$ this is a doubly stochastic matrix. Doubly stochastic matrices are guaranteed to have leading eigenvalue 1, with corresponding (left and right) eigenvectors $v$ satisfying $v_c = 1, \forall c$, consistent with Eq.~\eqref{eq:eigenstates_N}. Furthermore, if all matrix elements are positive the Perron–Frobenius theorem guarantees the nondegeneracy of this leading eigenvalue. Nonapplicability of the Perron-Frobenius theorem would require zero matrix elements in the transfer matrix, i.e. $E(\mathcal{N})_{ac} =0$, in turn requiring the corresponding matrix element to vanish in all unitary matrices and $\mathcal{N}^{(b)}_{ac}=0, \forall b$. As such, generic sets of unitary matrices result in solvable initial states.

Explicitly writing out the corresponding state results in the following, where for convenience we have again taken all $\mathcal{N}$ to be identical:
\begin{align}
\ket{\Psi(\mathcal{N})} &= \sum_{i_1, i_2,\dots}^q \vcenter{\hbox{\includegraphics[width=0.45\linewidth]{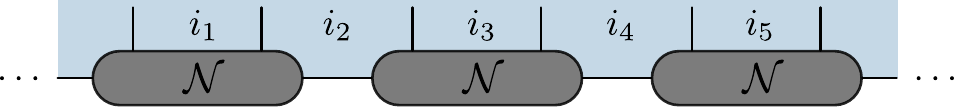}}}\, \ket{\dots i_1\, i_2\, i_3\, i_4\, i_5 \dots} \\
&=\sum_{\dots, i_1, i_2\dots}^q \left[\dots \mathcal{N}^{(i_1)}_{\dots i_2 }\mathcal{N}^{(i_3)}_{i_2,i_4}\mathcal{N}^{(i_5)}_{i_4\dots} \dots \right]\ket{\dots i_1\, i_2\, i_3\, i_4\, i_5\, i_6 \dots}\,,
\end{align}
This state can be rewritten as an exact MPS by introducing auxiliary tensors
\begin{calign}
\vcenter{\hbox{\includegraphics[width=0.15\linewidth]{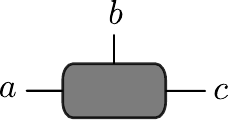}}}  = \mathcal{N}^{(b)}_{ac}
 &
 \vcenter{\hbox{\includegraphics[width=0.11\linewidth]{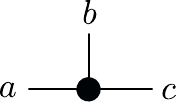}}}  = \delta_{a,b,c}
\end{calign}
where the second tensor serves to take care of the threefold appearances of the indices $i_2,i_4,\dots$ in the argument of the summation. These correspond to the $A$ and $B$ tensors in Eq.~\eqref{eq:MPS}, resulting in the equivalent MPS representation:
\begin{align}
\ket{\Psi(\mathcal{N})} &= \sum_{i_1, i_2,\dots}^q  \vcenter{\hbox{\includegraphics[width=0.45\linewidth]{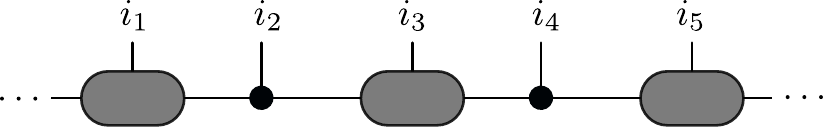}}}\, \ket{\dots i_1\, i_2\, i_3\, i_4\, i_5 \dots}
\end{align}
The time-evolved state can be represented in the shaded calculus and in tensor network notation respectively as follows:
\begin{align}
\vcenter{\hbox{\includegraphics[width=0.7\linewidth]{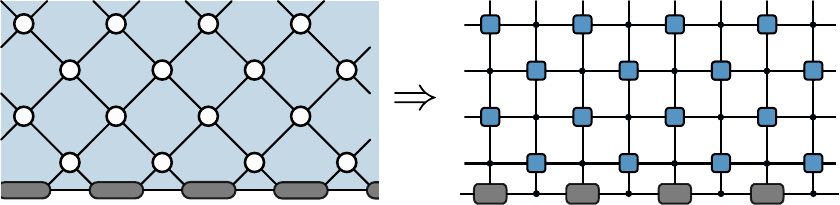}}}
\end{align}
The entanglement entropy for a subsystem of size $\ell$ and local Hilbert space dimension $q$ can again be calculated in the scaling limit where $\ell, t \to \infty$ for a finite ratio $\ell/t = \zeta$, using expressions of the form \eqref{eq:rho_shortt}:
\begin{align}\label{eq:entgrowth_bw2}
\lim_{\substack{\ell,t \to \infty \\ \ell/t = \zeta}} S_A(t)/\ell = \textrm{min}(2,\zeta^{-1})\log(q)
\end{align}
Since the derivation for the entanglement entropy is similar to the known derivation for dual-unitary circuits~\cite{piroli_exact_2020}, we here only sketch the outline. The R\'enyi entropies \eqref{eq:Renyi} can be calculated from $\Tr[\rho_A(t)^n]$, the graphical expression of which contains contractions of the transfer matrix~\eqref{eq:E_N}. In the scaling limit the transfer matrix can be replaced by the projector on its leading eigenstates and all contractions explicitly evaluated, resulting in an expression for the trace as $q^{(2t+2)(1-n)}$, in turn giving the result for the presented entanglement entropy. Clockwork circuits reproduce the maximal entanglement growth of dual-unitary brickwork circuits.

\paragraph{Hybrid circuits.} More general biunitary circuits typically require solvable states combining shaded and unshaded regions. The corresponding tensors in the initial state are largely similar to the ones in the clockwork circuits. With two shaded regions, the tensors $\mathcal{N}$ are again parametrized by three indices, and we write
\begin{align}
\mathcal{N}^{(b)}_{ac} = \vcenter{\hbox{\includegraphics[width=0.18\linewidth]{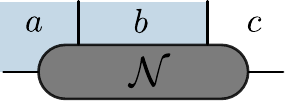}}}\,,
\end{align}
and the conditions for solvability are formally equivalent to those for Eq.~\eqref{eq:N_cw}:
\begin{align}
\vcenter{\hbox{\includegraphics[width=0.3\linewidth]{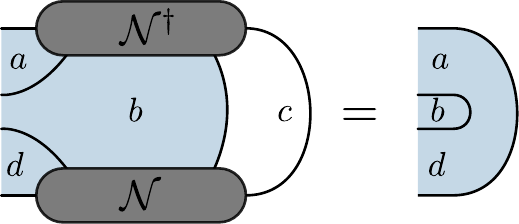}}} \qquad \Rightarrow \qquad \sum_{c=1}^q \left[\mathcal{N}^{(b)}_{ac}\right]^*\mathcal{N}^{(b)}_{dc} = \delta_{ad}, \qquad \forall b=1\dots q,
\\[5pt]
\vcenter{\hbox{\includegraphics[width=0.3\linewidth]{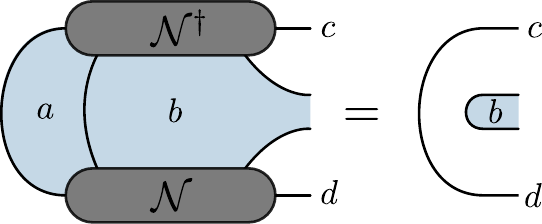}}} \qquad \Rightarrow \qquad \sum_{a=1}^q \left[\mathcal{N}^{(b)}_{ac}\right]^*\mathcal{N}^{(b)}_{ad} = \delta_{cd}, \qquad \forall b=1\dots q.
\end{align}
As such, the corresponding states can again be parametrized by a set of unitary matrices and be represented as a matrix product state. Considering e.g. the case of a diagonal moving boundary between the two regions, we find that:
\begin{align}
\vcenter{\hbox{\includegraphics[width=0.7\linewidth]{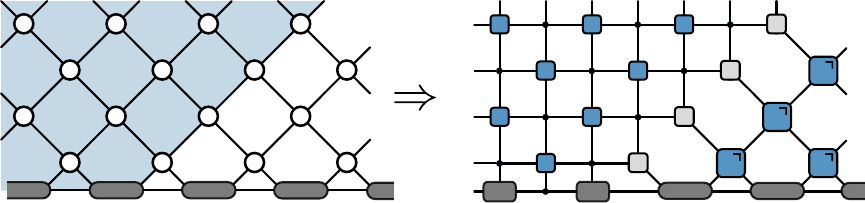}}}
\end{align}

\section{Restrictions on the Hilbert space dimension}
\label{sec:dimensions}
As already mentioned, the inclusion of unitary error bases introduces relations between the different Hilbert space dimensions. These restrictions can be illustrated by considering a dual-unitary brickwork circuit where a clockwork wedge has been inserted:
\begin{align}
\vcenter{\hbox{\includegraphics[width=0.5\linewidth]{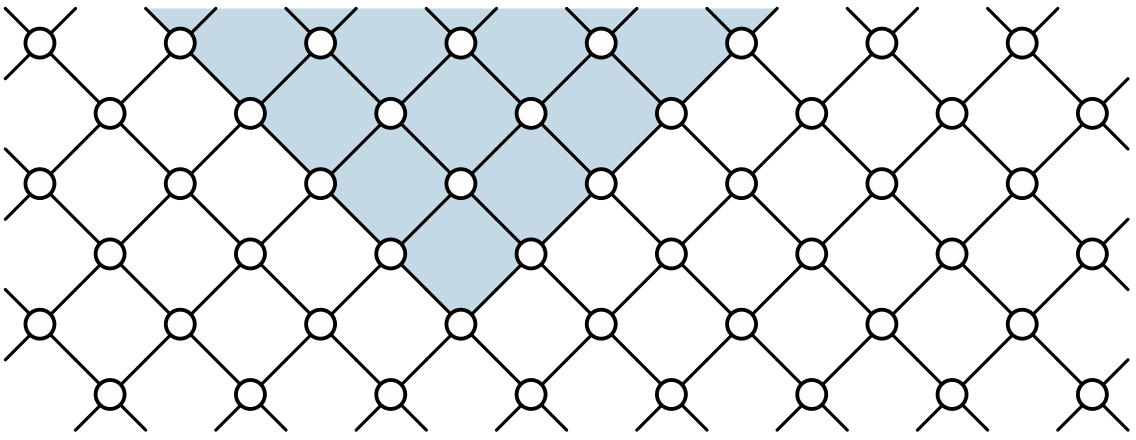}}},
\end{align}
Such diagrams now consist of both dual-unitary gates, quantum crosses, quantum Latin squares, and a single unitary error basis. When represented in terms of unitary gates, this circuit is represented as follows:
\begin{align}
\vcenter{\hbox{\includegraphics[width=0.5\linewidth]{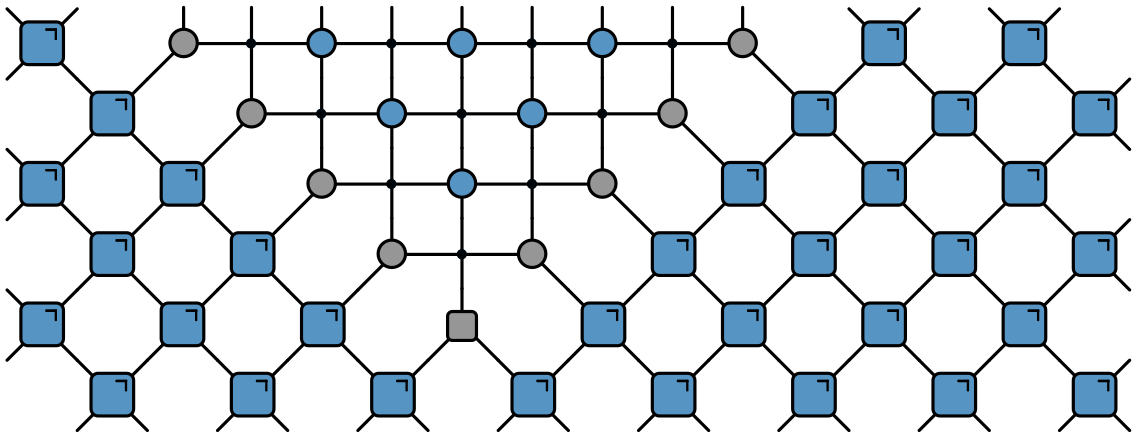}}},
\end{align}
The introduction of unitary error bases introduces restrictions on the local Hilbert spaces, since in this example the UEB has two wires of dimension $q$ coming in and a single wire of dimension $q^2$ coming out. However, circuits where wires carry different Hilbert spaces can still be considered, as also argued in Ref.~\cite{Borsi2022} for dual-unitary brickwork circuits.

The general restrictions on the dimensions of Hilbert spaces imposed by each biunitary can be formulated as follows (see also Section~\ref{sec:quantumstructures}): for dual-unitary gates, diametrically opposite wires must have the same dimension; for quantum crosses, opposite shaded regions must have the same dimension; for a quantum Latin square, the wire must have the same dimension as both shaded regions; for a Hadamard matrix, both shaded regions have the same dimension. For a UEB, the wires must have the same dimension as each other, and the shaded region must have the dimension of the product space.

In the above example, if we label the incoming wires in the UEB with Hilbert space dimensions $q$, then the outcoming shaded region must have dimension $q^2$. Since the UEB borders on a QLS, where both the two shaded regions and the single wire have the same dimension, this determines the Hilbert space dimensions of all wires bordering on a QLS to have dimension $q^2$. The quantum cross bordering the UEB similarly enforces all shaded regions inside the wedge to have dimension $q^2$, as illustrated below.
\begin{align}
\vcenter{\hbox{\includegraphics[width=0.35\linewidth]{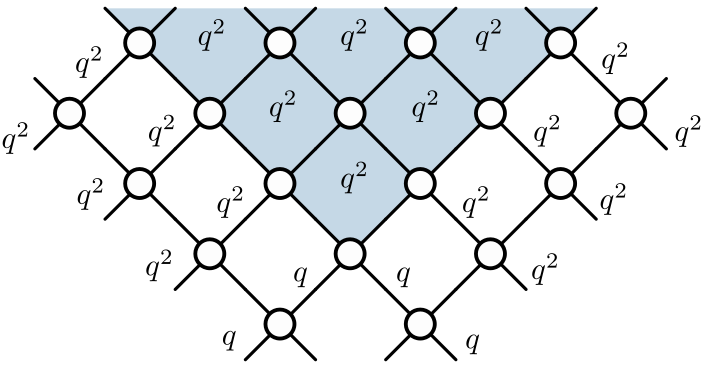}}},
\end{align}

\paragraph{Trivialisation.}
Note that it is possible to have restrictions where the only consistent solution is to have a large part of the local Hilbert spaces to be one-dimensional, in which case part of the circuit trivializes. We illustrate one such a circuit below:
\begin{align}
\vcenter{\hbox{\includegraphics[width=0.4\linewidth]{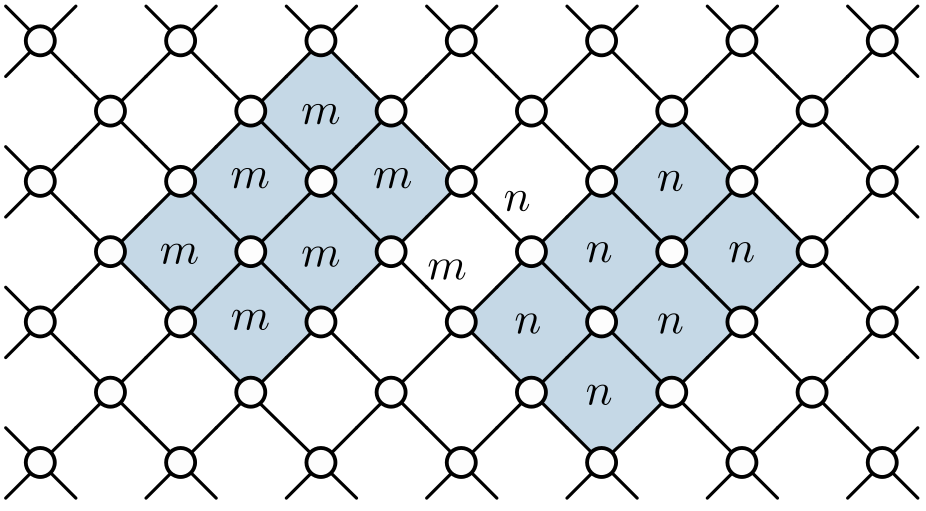}}}
\end{align}
In this example, the quantum crosses and Latin squares force the shaded regions to have the same dimension within each rectangle. The UEBs in the middle of the picture require $n^2=m$ and $m^2=n$, for which the only integer solution is $m=n=1$. 

\section{Discussion and outlook}
\label{sec:conclude}

Through the use of a 2\-categorical framework we have introduced biunitary circuits, unifying the notions of dual-unitary brickwork and interactions round-a-face (clockwork) circuits. While in this work we focused on the dynamics of correlation functions and entanglement, we expect that general results for dual-unitary circuits can be extended to biunitary circuits. Dual-unitary circuits have gained special attention in the context of quantum chaos because of exact calculations of e.g. the spectral form factor \cite{bertini_exact_2018,bertini_random_2021} and the spectral function \cite{fritzsch_eigenstate_2021} indicating chaotic behaviour. We expect that similar calculations are possible for biunitary circuits using the presented graphical calculus, which would establish the connection between biunitary circuits and random matrix theory. Furthermore, while generic parametrizations of dual-unitary gates will generally lead to chaotic quantum dynamics, it is possible to fine-tune the underlying gates such that the full circuit is no longer chaotic and systematic parametrizations of dual-unitary gates remain an active topic of research \cite{bertini_exact_2019,bertini_operator_ii_2020,claeys_ergodic_2021,Borsi2022,aravinda_dual-unitary_2021,rather_construction_2022,gombor_superintegrable_2022,singh_ergodic_2022,brahmachari_dual_2022}. It is still an open question how parametrizations of these new biunitary connections can influence the level of ergodicity and the corresponding diagnostics of quantum chaos. Additionally, exact results on operator spreading in (perturbed) dual-unitary brickwork circuits should have direct analogs in the presented biunitary circuits \cite{bertini_scrambling_2020,bertini_operator_2020,claeys_maximum_2020,reid_entanglement_2021,Rampp2022}.

Measurement schemes inspired by biunitarity have already allowed for exact studies of the interplay of dual-unitary dynamics with projective measurements \cite{claeys_exact_2022} and established further connections with random matrix theory through \emph{deep thermalization} \cite{ho_exact_2022, claeys_emergent_2022, ippoliti_dynamical_2022,claeys_universality_2023}. Exact calculations of the latter depended on the definition of solvable measurement schemes as an extension of solvable initial states preserving spatial unitarity, and it is an open question how to extend these solvable measurement schemes to general biunitary circuits. This question is especially relevant given the possible connection with measurement-based quantum computation enabled by dual-unitarity \cite{Stephen2022}.

There are several natural avenues along which to explore generalizations of the circuits described here, including the effects of lattice symmetries \cite{jonay_triunitary_2021,sommers2022crystalline,mestyan2022multi}, discrete conformal symmetry~\cite{Masanes2023}, random geometries \cite{kasim_dual_2022}, open systems \cite{kos_circuits_2022}, and higher dimensions~\cite{suzuki_computational_2022,Milbradt2023}.
On the level of biunitarity, it is possible to further extend the biunitary building blocks and include controlled families of biunitary connections~\cite{reutter_biunitary_2019}. Such biunitaries can similarly be included in the presented construction and would correspond to the propagation of classical information. We leave the study of such circuits to future work.

\section*{Acknowledgments}
A.L. gratefully acknowledges support from EPSRC Grant No. EP/P034616/1. J.V. gratefully acknowledges funding from the Royal Society. P.W.C. gratefully acknowledges M.A. Rampp for useful comments on the manuscript.

\appendix
\section{Explicit parametrizations of composite biunitaries}
\label{app:param_KIM}

In this Appendix we provide some explicit parametrizations for the composite biunitary connections discussed in Sec.~\ref{subsec:periodic}. The dual-unitary gates from Eq.~\eqref{eq:KIM_DU} are written as
\begin{align}
U_{ab,cd} =\,\, \vcenter{\hbox{\includegraphics[width=0.4\linewidth]{fig_KIM_DU}}}\,.
\end{align}
While the four complex Hadamard matrices can be chosen to be different, we will focus on the case where these are identical. For $q=2$ the simplest complex Hadamard matrices is given by
\begin{align}
H = \frac{1}{\sqrt{2}}\begin{pmatrix}
1 & 1 \\
 1 & -1
\end{pmatrix}\,.
\end{align}
The resulting dual-unitary gate has matrix elements $U_{ab,cd} = q \times H_{ab}H_{bd}H_{dc}H_{ca} $, which evaluates to
\begin{align}
U = \frac{1}{2}
\begin{pmatrix}
1 & 1 & 1 & -1 \\
1 & -1 & 1 & 1 \\
1 & 1 & -1 & 1 \\
-1 & 1 & 1 & 1
\end{pmatrix}\,.
\end{align}
These dual-unitary gates have the special properties that they are both Clifford gates and result in `matchgate' circuits that can be mapped to free fermions \cite{Valiant2001,Terhal2002}. The resulting circuit corresponds to the self-dual kicked Ising model at the integrable point \cite{akila_particle-time_2016,bertini_entanglement_2019,gopalakrishnan_unitary_2019,ho_exact_2022,Stephen2022}. Note that this gate is by construction not just dual-unitary, but self-dual: $\tilde{U}=U$.

The quantum cross from Eq.~\eqref{eq:KIM_cross} is constructed as
\begin{align}
(U_{a,c})_{b,d} = \,\,\, \vcenter{\hbox{\includegraphics[width=0.45\linewidth]{fig_KIM_cross}}}\,.
\end{align}
Expressed in matrix elements $(U_{a,c})_{b,d} = q \times \sum_{e} H_{ae}H_{be}H_{ce}H_{de}$, the different matrices evaluate to
\begin{align}
U_{0,0} = U_{1,1} =  \begin{pmatrix}
1 & 0 \\ 
0 & 1
\end{pmatrix}, \qquad U_{1,0} = U_{0,1} = \begin{pmatrix}
0 & 1 \\ 1 & 0
\end{pmatrix}\,.
\end{align}
More involved biunitary connections follow from dressing the complex Hadamard matrix with a phase, e.g. taking
\begin{align}
H = \frac{1}{\sqrt{2}}\begin{pmatrix}
1 & e^{i \phi} \\
1 & -e^{i \phi}
\end{pmatrix},
\end{align}
with the resulting dual-unitary gate given by 
\begin{align}
U = \frac{1}{2}
\begin{pmatrix}
1 & e^{2i\phi} & e^{i\phi} & -e^{i\phi} \\
e^{i\phi} & -e^{i\phi} & e^{2i\phi} & e^{4i\phi} \\
1 & e^{2i\phi}& -e^{i\phi} & e^{i\phi} \\
-e^{i\phi} & e^{i\phi} & e^{2i\phi} & e^{4i\phi}
\end{pmatrix}\,,
\end{align}
and the quantum cross by
\begin{align}
U_{0,0} = U_{1,1} =  e^{2i\phi} 
\begin{pmatrix}
\cos2\phi & -i \sin2\phi \\ 
-i\sin2\phi & \cos2\phi
\end{pmatrix}, \qquad U_{1,0} = U_{0,1} = e^{2i\phi} 
\begin{pmatrix}
-i \sin2\phi  & \cos2\phi \\ \cos2\phi & -i\sin2\phi
\end{pmatrix}\,.
\end{align}
For a generic phase the resulting circuit now corresponds to the self-dual kicked Ising model away from the integrable point.


\bibliographystyle{MyBibTexStyle}
\bibliography{Library}

\end{document}